\def\arxivversion{}
\ifdefined\arxivversion
\documentclass[11pt]{article}
\else
\documentclass{article}
\fi

\usepackage{microtype}
\usepackage{graphicx}
\usepackage{subcaption}
\usepackage{booktabs} 
\ifdefined\arxivversion
\usepackage[margin=1in]{geometry}
\usepackage[round]{natbib}
\fi

\usepackage{hyperref}

\ifdefined\arxivversion
\else
\usepackage[accepted]{icml2026}
\fi

\usepackage{amsmath}
\usepackage{amssymb}
\usepackage{mathtools}
\usepackage{amsthm}
\usepackage{amsfonts}

\usepackage[capitalize,noabbrev]{cleveref}

\usepackage{bm}
\DeclareBoldMathCommand{\bb}{b}
\DeclareBoldMathCommand{\br}{r}
\DeclareBoldMathCommand{\bg}{g}
\DeclareBoldMathCommand{\bt}{t}
\DeclareBoldMathCommand{\bu}{u}
\DeclareBoldMathCommand{\bv}{v}
\DeclareBoldMathCommand{\be}{e}
\DeclareBoldMathCommand{\bw}{w}
\DeclareBoldMathCommand{\bx}{x}
\DeclareBoldMathCommand{\bz}{z}

\DeclareBoldMathCommand{\bD}{D}

\DeclareBoldMathCommand{\bY}{Y}
\DeclareBoldMathCommand{\bX}{X}
\DeclareBoldMathCommand{\bZ}{Z}
\DeclareBoldMathCommand{\bM}{M}
\DeclareBoldMathCommand{\bP}{P}
\DeclareBoldMathCommand{\bI}{I}
\DeclareBoldMathCommand{\bT}{T}
\DeclareBoldMathCommand{\bU}{U}
\DeclareBoldMathCommand{\bK}{K}
\DeclareBoldMathCommand{\bS}{S}
\DeclareBoldMathCommand{\bQ}{Q}
\DeclareBoldMathCommand{\bW}{W}
\DeclareBoldMathCommand{\bE}{E}
\DeclareBoldMathCommand{\bV}{V}
\DeclareBoldMathCommand{\bJ}{J}

\DeclareBoldMathCommand{\bzero}{0}
\DeclareBoldMathCommand{\bone}{1}
\DeclareBoldMathCommand{\bxi}{\xi}
\DeclareBoldMathCommand{\bbeta}{\beta}
\DeclareBoldMathCommand{\brho}{\rho}
\DeclareBoldMathCommand{\btheta}{\theta}
\DeclareBoldMathCommand{\bdelta}{\delta}
\DeclareBoldMathCommand{\bPhi}{\Phi}
\DeclareBoldMathCommand{\bgamma}{\gamma}
\DeclareBoldMathCommand{\bmathcaly}{\mathcal{Y}}
\DeclareBoldMathCommand{\bmu}{\mu}
\DeclareBoldMathCommand{\bpsi}{\psi}
\DeclareBoldMathCommand{\bzeta}{\zeta}
\DeclareBoldMathCommand{\bvarepsilon}{\varepsilon}
\DeclareBoldMathCommand{\bepsilon}{\epsilon}

\theoremstyle{plain}
\newtheorem{theorem}{Theorem}[section]
\newtheorem{proposition}[theorem]{Proposition}
\newtheorem{lemma}[theorem]{Lemma}
\newtheorem{corollary}[theorem]{Corollary}
\theoremstyle{definition}
\newtheorem{definition}[theorem]{Definition}

\newtheorem{dataset}[theorem]{Dataset}
\theoremstyle{remark}
\newtheorem{remark}[theorem]{Remark}

\usepackage[disable,textsize=tiny]{todonotes}

\newcommand{\papertitle}{Design-Based Anytime-Valid Inference for Randomized Experiments with Delayed Outcomes and Staggered Entry}

\ifdefined\arxivversion
\title{\papertitle}
\author{
Michael Lindon\\
Netflix, Los Gatos, CA, USA\\
\texttt{michael.s.lindon@gmail.com}
\and
Nathan Kallus\\
Cornell University, Ithaca, NY, USA
}
\date{}
\else
\icmltitlerunning{Delayed Outcomes with Staggered Entry}
\fi

\begin{document}

\ifdefined\arxivversion
\maketitle
\else
\twocolumn[
  \icmltitle{Design-Based Anytime-Valid Inference for \\
    Randomized Experiments with Delayed Outcomes and Staggered Entry}



  \icmlsetsymbol{equal}{*}

  \begin{icmlauthorlist}
    \icmlauthor{Michael Lindon}{nflx}
    \icmlauthor{Nathan Kallus}{cornell}
  \end{icmlauthorlist}

  \icmlaffiliation{nflx}{Netflix, Los Gatos, CA, USA}
  \icmlaffiliation{cornell}{Cornell University, Ithaca, NY, USA}

  \icmlcorrespondingauthor{Michael Lindon}{michael.s.lindon@gmail.com}

  \icmlkeywords{Causal Inference, Anytime-Valid Inference, Confidence Sequences, Delayed Outcomes, Design-Based Inference}

  \vskip 0.3in
]



\printAffiliationsAndNotice{}  
\fi

\begin{abstract}
Delayed outcomes are ubiquitous in online experimentation: treatment can affect whether an outcome occurs, when it occurs, and its realized value.
To accommodate staggered entry while remaining robust to environmental nonstationarity and unit-level heterogeneity, we adopt a design-based perspective and target the sample cumulative reward in each arm as a function of calendar time.
Our confidence sequences allow practitioners to continuously monitor the counterfactual incremental reward, such as revenue, that would have been realized by calendar time $t$ had all entered units been assigned to treatment rather than control.
The main technical challenge is the choice of design-based filtration, complicated by the presence of asynchronous potential outcome times.
We show that the IPW treatment-effect estimation error is not a martingale with respect to any filtration, while each arm-specific IPW estimation error is a martingale with respect to a carefully chosen arm-specific event-time filtration.
We therefore construct a confidence sequence for the treatment effect by combining two arm-level confidence sequences with a union bound, and further demonstrate that this can outperform the traditional design-based variance upper bound.
Finally, we characterize the class of augmentations for which the per-arm AIPW estimation error remains a martingale.
\end{abstract}

\section{Introduction}
\label{sec:introduction}
In many randomized experiments, the outcome of interest is not realized immediately after treatment assignment, but is observed after some delay.
Crucially, the treatment can affect the \textit{value} of the outcome, the \textit{timing} of its realization, and the \textit{proportion} of units that ever realize an event at all (with $t_i(w) = \infty$ denoting units for which no event ever occurs).
In conversion experiments, for example, the treatment may affect (i) the conversion rate, (ii) the time to conversion, and (iii) the value of the conversion (e.g.\ revenue).
Similarly, in marketing A/B tests comparing ad campaigns, the treatment may \textit{accelerate} the outcome arrival, sometimes referred to in marketing as a ``pull-forward'' effect \citep{neslin1985consumer, ailawadi2007decomposition, gupta1988impact}.
Yet, while treatment outcomes arrive sooner, they may be of lesser value, or fewer in eventuality, presenting a trade-off between treatment and control.

In this work, we aim to enable continuous monitoring of experiments with delayed outcomes using anytime-valid inference methods.
At any analysis time $t$, the data is \textit{right-censored}: the outcomes for some units have not yet been observed.
Crucially, the treatment itself influences this censoring.
Not only are unit outcomes random, but the order in which units are observed is also random.

Our contributions are as follows.
To remain robust to temporal heterogeneity and nonstationarity, we formulate delayed-outcome experimentation with staggered entry in a design-based framework where both outcome time and value are unknown fixed potential outcomes.
As treatment assignment is the only source of randomness, this enables us to provide confidence sequences under a minimal set of assumptions.
We target the sample cumulative reward as a function of time, a causal estimand that avoids modeling the unobserved future or assuming a superpopulation.
We characterize the augmented estimators that preserve a martingale structure under a specific design-based filtration: the augmentation must activate at the event time and remain constant thereafter.
We then prove a negative result: the treatment effect estimation error is not a martingale under \textit{any} filtration.
This impossibility result arises from the dependence between arms induced by randomized assignment and the temporal misalignment caused by differentially delayed outcomes.
We resolve this via a union bound over single-arm confidence sequences---a necessity analogous to the variance upper bound in classical design-based inference, where cross-arm dependence
cannot be estimated from observed data.                                                                                                                                                
We demonstrate that this union bound actually yields \textit{tighter} intervals than the variance upper bound when treatment induces asymmetry in outcome arrival rates.

\ifdefined\arxivversion
\else
\paragraph{Code Availability.}
A reference implementation of the proposed method is available at \url{https://github.com/michaellindon/icml2026}.
\fi

\ifdefined\arxivversion
\else
\paragraph{Conflict of Interest Disclosure.}
Michael Lindon is employed by Netflix, and Appendix~\ref{section:case_study} analyzes Netflix software telemetry data.
The authors are not aware of any other financial conflicts of interest related to this work.
\fi

\section{Background}
\subsection{Design-Based vs.\ Superpopulation}
\label{sec:superpopulation}

Randomization (design-based) inference treats potential outcomes as fixed and averages only over the treatment assignment 
\citep{fisher1935design,neyman,cochran1977sampling,rosenbaum2002observational,rubin1974estimating,imbens2015causal,freedman2008randomization}.
By contrast, superpopulation analyses additionally posit that the observed experimental units are draws from an underlying distribution and target population-average parameters \citep{abadie2020sampling}.
Both perspectives are widely used; the differences between the estimand (sample vs.\ population) and the uncertainty quantification (randomization vs.\ sampling) can be subtle \citep{ding2017paradox}.

Our focus is the design-based perspective because online experiments are often entangled with time-varying behavior.
Units entering the experiment through time, as opposed to being randomly sampled simultaneously from a superpopulation, are fundamentally heterogeneous.
Seasonality, promotions, and new product releases introduce nonstationarity in the observed outcomes.
The heterogeneity mix of units combined with the nonstationarity of the environment make the joint superpopulation distribution of outcome times and values hard to justify. In view of this, uncensored observations from an earlier-admitted unit cannot be used to infer anything about a censored outcome of a later-admitted unit, ruling out the use of survival-analytical solutions.
Related discussions emphasize that superpopulation assumptions are typically unverifiable and can be fragile under interference, heterogeneity, and non-random sampling \citep{freedman2008randomization}.
Conversely, the design-based perspective \textit{does not} make assumptions about the joint distribution of outcome times and values, and only assumes the potential outcomes exist.
Our estimand is the \emph{sample cumulative reward} (and its treatment-control difference) as a function of calendar time, i.e., what would have been observed up to time $t$ had all enrolled units been assigned to a given arm.
This aligns with common product metrics such as cumulative revenue or conversions, and crucially, avoids extrapolating beyond the currently observed censoring horizon, focusing instead on the counterfactual of having used just one or the other campaign to this point.

\subsection{Anytime-Valid Inference}
Continuous monitoring invalidates classical fixed-horizon $p$-values and confidence intervals due to optional stopping.
A large literature in sequential analysis addresses this via time-uniform guarantees, including confidence sequences \citep{wald1947sequential,darling1967confidence,robbins1970confidence,robbins1970constrained,howard2021time} and, more recently, game-theoretic/$e$-value approaches  \citep{shafer,ramdas2023game,safetesting}.

Most anytime-valid constructions assume a \textit{fixed} sequence (or order of units) with \textit{random} outcomes.
Differentially delayed outcomes break this approach because treatment can change \emph{which} units have a realized outcome by time $t$, so the sequence itself becomes treatment-dependent.
For this reason we provide confidence sequences with respect to an external clock $t$ instead of an enumeration over units.
Closest to our setting are (i) design-based confidence sequences for randomized experiments without delay \citep{ham}, (ii) proportional hazards-based sequential logrank tests \citep{gu1991weak,ter2020anytime}, and (iii) anytime-valid inference for counting processes \citep{lindon2022anytime,lindonkallus}.
We contribute to this related-work landscape by developing design-based, time-uniform inference for cumulative reward processes under delayed outcomes and staggered entry, avoiding proportional hazards assumptions and targeting a calendar-time estimand that does not require modeling unobserved tails.

In the presence of nonstationarity, the purpose of anytime-valid inference is not to stop the experiment early and declare an optimal treatment arm.
Instead it concerns providing \textit{honest} inference about what is known \textit{so far}, without contradicting ourselves in the future.
Indeed, even with fully observed outcomes, which arm is better may depend on how long one waits, given that the fraction or value of conversions may well be inversely related to how soon they are realized.
Declaring an arm optimal at any point $t$ would require making very strong assumptions about the kind of nonstationarity that can occur in the future.
We provide this inference to the researcher as the experiment progresses, and defer to them to take the appropriate action (if any).
In many cases, this may involve continuously monitoring the experiment until all units are observed for a pre-specified window.

\section{Observations, Estimands, and Estimators}
To formalize our design-based approach to delayed outcomes, we consider the time of realization itself as a potential outcome (see Table~\ref{tab:potential_outcomes}), as interventions can affect both the timing and the value of the outcome.

\begin{table}[t]
    \caption{First five rows of Dataset~\ref{dataset:simulation}. Staggered entry times $E_i$, potential time and outcome values $t_i(w)$ and $y_i(w)$ for $w \in \{0,1\}$, treatment assignment $w_i$, probability of treatment $\pi_i(1)$, and observed time $t_i$ and outcome $y_i$.
    A hyphen indicates that the corresponding event never occurs, so the outcome value does not apply.}
    \label{tab:potential_outcomes}
    \begin{center}
    \ifdefined\arxivversion
    \begin{small}
    \setlength{\tabcolsep}{5pt}
    \else
    \begin{tiny}
    \setlength{\tabcolsep}{3pt}
    \fi
    \begin{sc}
    \begin{tabular}{cccccccccc}
    \toprule
    $i$ & $E_i$ & $t_i(0)$ & $t_i(1)$ & $y_i(0)$ & $y_i(1)$ & $w_i$ & $\pi_i(1)$ & $t_i$ & $y_i$ \\
    \midrule
    0 & .0506 & $\infty$ & 1.29 & -- & 0.79 & 0 & 0.5 & $\infty$ & -- \\
    1 & .0552 & 19.49 & 1.48 & 1.20 & 0.69 & 1 & 0.5 & 1.48 & 0.69 \\
    2 & .0695 & $\infty$ & 1.00 & -- & 0.75 & 1 & 0.5 & 1.00 & 0.75 \\
    3 & .0920 & 19.53 & $\infty$ & 1.43 & -- & 1 & 0.5 & $\infty$ & -- \\
    4 & .1084 & 9.46 & $\infty$ & 1.59 & -- & 0 & 0.5 & 9.46 & 1.59 \\
    \bottomrule
    \end{tabular}
    \end{sc}
    \ifdefined\arxivversion
    \end{small}
    \else
    \end{tiny}
    \fi
    \end{center}
    \vskip -0.1in
\end{table}

Let $E_i$ denote the entry time of unit $i$ and assume units are enumerated in increasing order of entry time.
Note, however, that the order in which unit \textit{outcomes} are \textit{observed} depends upon the realized treatment assignments.
Let $(t_i(w), y_i(w))$ denote the potential time and outcome of unit $i$ under treatment $w\in\{0,1\}$.
We define the \textit{potential reward process} for unit $i$ under treatment $w \in \{0,1\}$ as
\begin{equation}
    \label{eq:unit_potential_reward_process}
r_{it}(w) = y_i(w) \bone[t_i(w) \leq t],
\end{equation}
where $\bone[t_i(w) \leq t]$ denotes the indicator function.
This formulation models a single delayed event per unit.
Let $\mathcal{E}_t \coloneqq \{i : E_i \le t\}$ be the index set of units who have entered the experiment by time $t$, and let $N_t \coloneqq |\mathcal{E}_t|$ denote the number of entered units by time $t$.
We further sum over all units who have entered the experiment by time $t$ to define the total reward process under treatment $w$ by time $t$:
\begin{equation}
r_t(w) = \sum_{i \in \mathcal{E}_t} r_{it}(w).
\end{equation}
This defines an estimand of primary interest -- how much revenue would have been generated as a function of time if all units were assigned to treatment $w$.
We can then define the difference to determine the incremental reward, as a function of time, provided by the treatment:
\begin{equation}
\Delta_t \coloneqq r_t(1) - r_t(0).
\end{equation}
Notice our estimand is not a property of a distribution, it directly targets the business metric of interest (e.g.\ incremental revenue) and circumvents the need for distributional assumptions as the only source of randomness is via treatment assignment.
We further remark that this estimand is in fact already implicit in common industry practice: when analyzing an experiment at time $t$, practitioners routinely impute the outcome as zero for any unit whose event has not yet been observed. Although often regarded as an ad-hoc convenience, this is precisely the $y_i(w) \bone[t_i(w) \le t]$ term in equation \eqref{eq:unit_potential_reward_process}.

We assume a randomized experiment in which units are independently assigned to arm $w$ with probability $\pi_i(w) = \mathbb{P}[W_i = w]$, where $\pi_i(w) > \pi_{\min} > 0$. All
expectations and variances throughout are taken with respect to this randomization distribution, treating potential outcomes as fixed.
Under standard SUTVA assumptions (no interference, no multiple versions of treatment), we can write the observed quantities for unit $i$ in terms of the potential outcomes through the switching equations:
\begin{align}
    y_i &= w_i y_i(1) + (1-w_i) y_i(0), \\
    t_i &= w_i t_i(1) + (1-w_i) t_i(0), \\
    r_{it} &= w_i r_{it}(1) + (1-w_i) r_{it}(0),
\end{align}
denoting the observed outcome, observed time, and observed revenue process for unit $i$.
We assume the potential outcomes are bounded: $|y_i(w)| \le B$ for all $i$ and $w \in \{0,1\}$.

We assume access to a vector of pre-treatment covariates $X_i$ for each unit, and let $m_{it}(w) = m(X_i, w, t)$ denote a pointwise prediction of $r_{it}(w)$.
The augmented inverse probability weighted (AIPW) estimator is:
\begin{equation}
    \label{eq:aipw_estimator}
    \hat{r}_{it}(w) = m_{it}(w) + \frac{\bone[w_i=w]}{\pi_i(w)} e_{it}(w),
\end{equation}
where $e_{it}(w) = r_{it}(w) - m_{it}(w)$ is the residual of the working model \citep{aipw}.
Setting $m_{it}(w) = 0$ recovers the Horvitz-Thompson (IPW) estimator \citep{horvitz1952generalization}.
Under randomization, we have
\begin{align}
    \mathbb{E}[\hat{r}_{it}(w)] &= r_{it}(w), \\
    \mathbb{V}[\hat{r}_{it}(w)] &= \frac{1-\pi_i(w)}{\pi_i(w)} e_{it}(w)^2.
\end{align}

The unbiasedness holds for any choice of $m_{it}(w)$, while the variance depends on the squared residual $e_{it}(w)^2$ it induces.
Similarly defining $\hat{\Delta}_{it} = \hat{r}_{it}(1) - \hat{r}_{it}(0)$ and $\Delta_{it} = r_{it}(1) - r_{it}(0)$, we have
\begin{align}
    \mathbb{E}[\hat{\Delta}_{it}] &= \Delta_{it}, \\
    \mathbb{V}[\hat{\Delta}_{it}] &= \frac{1-\pi_i(1)}{\pi_i(1)} e_{it}(1)^2 + \frac{1-\pi_i(0)}{\pi_i(0)} e_{it}(0)^2 \notag\\
    &\quad + 2 e_{it}(1) e_{it}(0) \label{eq:te_variance}\\
    &\leq \sigma_{it}^2 \coloneqq \frac{e_{it}(1)^2}{\pi_i(1)} + \frac{e_{it}(0)^2}{\pi_i(0)}. \label{eq:variance_upper_bound}
\end{align}
The last inequality is the traditional design-based upper bound on the variance, $\sigma_{it}^2$, commonly used to compute conservative \textit{pointwise} confidence intervals in the classical setting \citep{neyman,imbens2015causal}.
It is necessary because we cannot estimate the cross term, since we do not simultaneously observe both potential outcomes.
As we describe later in Remark \ref{rem:cross_arm_dependence}, an analogous phenomenon manifests in the sequential setting, whereby the dependence between arms precludes the existence of a common filtration under which both arms are martingales.

An unbiased estimator of the upper bound on the variance is:
\begin{equation}
    \label{eq:variance_upper_bound_estimator}
    \hat{\sigma}_{it}^2 = \hat{e}_{it}(1)^2 + \hat{e}_{it}(0)^2,
\end{equation}
where $\hat{e}_{it}(w) = \hat{r}_{it}(w) - m_{it}(w) = \frac{\bone[w_i=w]}{\pi_i(w)} e_{it}(w)$ is the IPW-weighted residual.
We similarly define our estimators of the cumulative and incremental reward processes as
\begin{equation}
    \hat{r}_t(w) = \sum_{i \in \mathcal{E}_t} \hat{r}_{it}(w), \quad \text{and} \quad
    \hat{\Delta}_t = \hat{r}_t(1) - \hat{r}_t(0).
\end{equation}

\section{Confidence Sequences for $\hat{r}_t(w)$ and $\hat{\Delta}_t$}
\label{section:filtrations}
Our goal in this section is to construct asymptotic confidence sequences for $r_t(w)$ and $\Delta_t$.
The route is through strong approximation \citep{strassen1964invariance,strassen1967}: under the regularity conditions of Lemma~\ref{lemma:strong_approx} and an appropriate martingale structure, the estimation error process can be coupled to a time-changed Brownian motion.
Inverting the normal-mixture boundary for this Brownian approximation then yields an asymptotic confidence sequence \citep{waudbysmith}.
The central issue is therefore to choose a filtration under which the estimation error is both adapted and has zero conditional drift.

In the design-based framework, we condition on the realized entry times, covariates, and potential outcomes.
Conditional on these quantities, the only random variables are the treatment assignments.
Define the centered assignment weight
\[
    Z_i(w) \coloneqq \frac{\bone[w_i=w]}{\pi_i(w)} - 1.
\]
For a fixed arm $w$, the unit-level estimation error is
\begin{equation}
    \label{eq:unit_error_process}
    M_{it}(w) \coloneqq \hat{r}_{it}(w)-r_{it}(w)=Z_i(w)e_{it}(w),
\end{equation}
and the cumulative estimation error is
\begin{equation}
    \label{eq:estimation_error_process}
    M_t(w) \coloneqq \hat{r}_t(w) - r_t(w) = \sum_{i \in \mathcal{E}_t} M_{it}(w).
\end{equation}
This decomposition isolates the filtration problem.
The potential reward path $r_{it}(w)$ is fixed by conditioning, while a candidate augmentation path must be fixed in advance or measurable from information available before it is used.
Thus the new randomness in the unit-$i$ jump enters through $Z_i(w)$.
If the filtration exposes $Z_i(w)$ before the residual can jump, the jump is generally predictable; if it exposes $Z_i(w)$ after the residual has jumped, the process is not adapted.
We therefore compare three candidate filtrations.
The conditioned-on quantities appearing in these filtrations record calendar-time availability; they are not additional sources of randomness.
\begin{definition}[Designer Filtration]
    Let $\mathcal{G}_t = \sigma(\{(E_i, X_i, w_i) : E_i \le t\})$. This corresponds to the filtration where the treatment assignment $w_i$ is revealed at the entry time  $E_i$.
\end{definition}
\begin{definition}[Observer Filtration]
    Let
    \[
        \mathcal{H}_t
        \coloneqq
        \sigma\!\left(
            \{(E_i, X_i) : E_i \le t\}
            \cup
            \{w_i : t_i \le t\}
        \right).
    \]
    This corresponds to the filtration where the treatment assignment $w_i$ is revealed at the observation time $t_i$.
\end{definition}
\begin{definition}[Single-Arm Filtration]
    For each arm $w$, let
    \[
        \mathcal{F}_t(w)
        \coloneqq
        \sigma\!\left(
            \{(E_i, X_i) : E_i \le t\}
            \cup
            \{w_i : t_i(w) \le t\}
        \right).
    \]
    This corresponds to the $w$-specific filtration where the treatment assignment $w_i$ is revealed at the potential event time $t_i(w)$.
\end{definition}

The designer filtration is natural from the experimenter's perspective: users are randomized when they enter, so the assignment is immediately available to the experiment designer.
For the IPW estimator this makes the arm-$w$ error process adapted, but it reveals the assignment too early for a martingale argument.
Consider the IPW special case $m_{it}(w)=0$.
At the instant before the arm-$w$ event, $\mathcal{G}_{t_i(w)-}$ already contains $w_i$, while $y_i(w)$ and $t_i(w)$ are fixed by conditioning.
The unit-level jump is therefore predictable:
\begin{equation}
    \begin{split}
    &\mathbb{E}\!\left[ M_{i,t_i(w)}(w) - M_{i,t_i(w)-}(w) \mid \mathcal{G}_{t_i(w)-} \right] \\
    &\quad = Z_i(w)y_i(w).
    \end{split}
\end{equation}
This conditional expectation is generally nonzero, so the unit-level martingale decomposition already fails under $\mathcal{G}_t$.

The observer filtration has the opposite problem.
It reveals $w_i$ only when the observed outcome is realized at $t_i=t_i(w_i)$.
This is the realized event time, whereas the arm-$w$ error process jumps at the potential event time $t_i(w)$.
Consequently, $M_t(w)$ need not be adapted to $\mathcal{H}_t$.
For example, in the IPW special case, suppose $t_i(1) < t_i(0)$ and $w_i = 0$.
For any $s$ such that $t_i(1) < s < t_i(0)$, the arm-1 error has already jumped and equals
\[
    M_{is}(1)
    = \left(\frac{\bone[w_i=1]}{\pi_i(1)} - 1\right)y_i(1),
\]
whose value depends on $w_i$.
But $w_i$ is not yet part of $\mathcal{H}_s$ because the observed event has not occurred.
Thus the unit-level arm-1 error is not $\mathcal{H}_s$-measurable.

The single-arm filtration is aligned with the arm-$w$ jump time.
It hides $w_i$ before $t_i(w)$ and reveals it exactly when the arm-$w$ potential event occurs.
This is not generally an observed-data filtration, because it depends on the counterfactual event time $t_i(w)$ for units assigned to the other arm.
It is nevertheless the filtration aligned with the design-based randomization needed for single-arm martingale structure: among the three candidates above, it is the only one that reveals assignment at the same time as the corresponding arm-specific jump.

\subsection{Single-Arm Inference}
We now characterize when the unit-level arm-$w$ error process is a martingale under $\mathcal{F}_t(w)$.
The condition is unitwise.
At time $t_i(w)$, the filtration reveals $w_i$; immediately before that time, $w_i$ is still unrevealed.
Thus any augmentation value multiplied by $Z_i(w)$ at the jump must already be determined from the strict past.
For nontrivial augmentations, we therefore require $m_{i,t_i(w)}(w)$ to be $\mathcal{F}_{t_i(w)-}(w)$-measurable.
This rules out using unit $i$'s own treatment assignment, but permits functions of pre-treatment covariates and of outcomes or assignments available strictly before $t_i(w)$.
\begin{theorem}[Unit-Level Single-Arm Martingale Condition]
    \label{thm:single_arm_martingale}
    Fix an arm $w$ and a unit $i$, and let $M_{it}(w)$ be as in equation~\eqref{eq:unit_error_process}.
    Suppose $m_{i,t_i(w)}(w)$ and $\pi_i(w)$ are $\mathcal{F}_{t_i(w)-}(w)$-measurable, with $\mathbb{P}(w_i=w \mid \mathcal{F}_{t_i(w)-}(w))=\pi_i(w)$.
    Then $(M_{it}(w))_t$ is adapted to $\mathcal{F}_t(w)$ and has zero conditional mean increments if and only if
    \[
        \begin{aligned}
        m_{it}(w) &= 0 && \text{for all } t < t_i(w),\\
        m_{it}(w) &= m_{i,t_i(w)}(w) && \text{for all } t \ge t_i(w).
        \end{aligned}
    \]
    Equivalently, since the increments are bounded, $(M_{it}(w))_t$ is a martingale with respect to $\mathcal{F}_t(w)$.
\end{theorem}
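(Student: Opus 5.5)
Write $\tau \coloneqq t_i(w)$ and decompose the unit-level error as
\[
M_{it}(w)=Z_i(w)e_{it}(w)=Z_i(w)\bigl(y_i(w)\bone[\tau\le t]-m_{it}(w)\bigr).
\]
The reward part is a deterministic path, fixed by the design-based conditioning, that jumps only at $\tau$. The plan is to split the time axis into the pre-event region $t<\tau$ and the post-event region $t\ge\tau$. In each region we read off what adaptedness and zero-drift force on $m_{it}(w)$. The key facts are:
\begin{itemize}
\item $w_i$, hence $Z_i(w)$, is \emph{not} $\mathcal{F}_t(w)$-measurable for $t<\tau$;
\item $w_i$ becomes $\mathcal{F}_t(w)$-measurable for all $t\ge\tau$;
\item $\mathbb{E}[Z_i(w)\mid\mathcal{F}_{\tau-}(w)]=0$, by the hypothesis $\mathbb{P}(w_i=w\mid\mathcal{F}_{\tau-}(w))=\pi_i(w)$.
\end{itemize}
We also use that $Z_i(w)$ is nondegenerate, since $0<\pi_i(w)<1$ is implied by $\pi_i(w)>\pi_{\min}$ on both arms.

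\textbf{Sufficiency.} Suppose $m_{it}(w)=0$ for $t<\tau$ and $m_{it}(w)=m_{i\tau}(w)$ for $t\ge\tau$. Then
\[
M_{it}(w)=Z_i(w)\bigl(y_i(w)-m_{i\tau}(w)\bigr)\bone[\tau\le t].
\]
This process is identically $0$ before $\tau$, hence adapted there. From $\tau$ onward it is a product of $\mathcal{F}_\tau(w)$-measurable quantities, using the measurability assumption on $m_{i\tau}(w)$ and $\pi_i(w)$. The only nonzero increment is the jump at $\tau$. Conditionally on $\mathcal{F}_{\tau-}(w)$, the factor $y_i(w)-m_{i\tau}(w)$ is known, so the conditional mean of the jump is
\[
\bigl(y_i(w)-m_{i\tau}(w)\bigr)\,\mathbb{E}[Z_i(w)\mid\mathcal{F}_{\tau-}(w)]=0 .
\]
For a pair $s<t$ straddling $\tau$, we obtain $\mathbb{E}[M_{it}-M_{is}\mid\mathcal{F}_s(w)]=0$ by the tower property through $\mathcal{F}_{\tau-}(w)$. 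The process is bounded, because $|y_i|\le B$, $\pi_i\ge\pi_{\min}$, and $m$ is assumed bounded or integrable. Boundedness upgrades zero-mean increments to a genuine martingale.

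\textbf{Necessity, pre-event region.} For $t<\tau$ we have $M_{it}(w)=-Z_i(w)m_{it}(w)$. Adaptedness requires this to be $\mathcal{F}_t(w)$-measurable. In the design-based setup $m_{it}(w)$ may depend on covariates and on other units' revealed data, but not on $w_i$. The independence of assignments across units makes $Z_i(w)$ independent of $\mathcal{F}_t(w)\vee\sigma(m_{it}(w))$. A nondegenerate independent variable times a measurable quantity can only be measurable if $m_{it}(w)=0$ almost surely. This is the step I expect to be the most delicate. One must state precisely that $m_{it}(w)$ carries no information about $w_i$ before $\tau$. I would argue it via the conditional variance:
\[
\mathrm{Var}\bigl(M_{it}\mid\mathcal{F}_t(w)\bigr)=m_{it}(w)^2\,\frac{1-\pi_i(w)}{\pi_i(w)},
\]
which must vanish.

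\textbf{Necessity, post-event region.} For $\tau\le s<t$, the filtration already contains $w_i$, so $Z_i(w)$ is known and the increment
\[
M_{it}-M_{is}=-Z_i(w)\bigl(m_{it}(w)-m_{is}(w)\bigr)
\]
has conditional mean equal to itself. Zero drift therefore forces $Z_i(w)\bigl(m_{it}(w)-m_{is}(w)\bigr)=0$. Because $Z_i(w)\neq0$ almost surely, this gives $m_{it}(w)=m_{is}(w)$, and hence $m_{it}(w)=m_{i\tau}(w)$ for all $t\ge\tau$. Combining the two regions yields the stated characterization. The final equivalence with the martingale property follows from boundedness of the increments.
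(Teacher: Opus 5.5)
Your proposal is correct and follows essentially the same route as the paper's proof. For sufficiency, $M_{it}(w)$ vanishes before $t_i(w)$, has a single jump $Z_i(w)\bigl(y_i(w)-m_{i,t_i(w)}(w)\bigr)$ with zero $\mathcal{F}_{t_i(w)-}(w)$-conditional mean, and is constant afterwards. For necessity, $-Z_i(w)m_{it}(w)$ is not adapted before $t_i(w)$, and the post-event increment $-Z_i(w)\bigl(m_{it}(w)-m_{is}(w)\bigr)$ is its own conditional mean. Your conditional-variance argument, the explicit use of $Z_i(w)\neq 0$, and the tower-property step for intervals straddling $t_i(w)$ make precise points the paper only states informally.

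One caveat applies equally to the paper's Case~2. The post-event step tacitly requires $m_{it}(w)$ to be $\mathcal{F}_s(w)$-measurable, for example deterministic given the conditioning. Without that, an augmentation can change after $t_i(w)$ only through mean-zero innovations revealed later, giving zero drift without being constant. An example is $a\,Z_j(w)\bone[t_j(w)\le t]$ for a unit $j$ with $t_j(w)>t_i(w)$.
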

Under the conditions of Theorem~\ref{thm:single_arm_martingale}, all unit-level processes are martingales with respect to the same filtration, so linearity of conditional expectation implies that the cumulative estimation error process
\[
    M_t(w)=\sum_{i\in\mathcal{E}_t}M_{it}(w)
\]
is also a martingale with respect to $\mathcal{F}_t(w)$.
Theorem~\ref{thm:single_arm_martingale} shows that the martingale requirement is highly restrictive.
For each unit, the only martingale-preserving augmentations are event-time augmentations,
\[
    m_{it}(w)=f_i(w)\bone[t_i(w)\le t],
\]
where $f_i(w)$ is $\mathcal{F}_{t_i(w)-}(w)$-measurable.
Thus the augmentation must be zero before the arm-$w$ potential event time and constant afterward.
This characterization is structural but generally oracle: for a fixed arm $w$, computing the activation indicator requires knowing $t_i(w)$ for every unit, whereas randomized assignment reveals $t_i(w)$ only for units assigned to arm $w$.
Thus nontrivial event-time AIPW estimators are not generally constructible from the realized data.
The IPW estimator remains the directly practical special case ($f_i(w)=0$), while event-time AIPW serves as an oracle target for what practical AIPW procedures should approximate.

To see the martingale structure, order the units by their potential event times $t_i(w)$ such that $t_{\sigma(1)}(w) \le t_{\sigma(2)}(w) \le \cdots$.
The process $M_t(w)$ changes value only at the discrete event times $(t_{\sigma(i)}(w))_{i \in \mathbb{N}}$.
At each jump, $Z_i(w)$ is revealed for the first time under $\mathcal{F}_t(w)$, while the residual $e_{i,t_i(w)}(w)$ is measurable from the strict past by Theorem~\ref{thm:single_arm_martingale}.
Thus each jump has conditional mean zero.
Let
\begin{align}
    \label{eq:single_arm_variance}
    V_t(w) &:= \sum_{i \in \mathcal{E}_t}
        \frac{1-\pi_i(w)}{\pi_i(w)} e_{it}(w)^2,\\
    \hat{V}_t(w) &:= \sum_{i \in \mathcal{E}_t}
        (1-\pi_i(w)) \hat{e}_{it}(w)^2,
\end{align}
denote the predictable quadratic variation and its plug-in estimator.
Under the event-time condition, $e_{it}(w)=0$ before $t_i(w)$ and is constant thereafter, so $V_t(w)$ increases only at arm-$w$ potential event times.
The estimator $\hat V_t(w)$ is adapted to $\mathcal F_t(w)$ under the event-time condition.
It is also estimable from the observed arm-$w$ samples: random assignment reveals $(t_i(w),y_i(w))$ for units with $w_i=w$, and these observed residual jumps provide an unbiased sample for the squared-jump clock.

Since $(M_t(w))_t$ is a square-integrable martingale with bounded increments, we can approximate it via a strong invariance principle.

\begin{lemma}[\citet{strassen1964invariance, strassen1967}]
\label{lemma:strong_approx}
Fix an arm $w\in\{0,1\}$ and suppose the event-time condition of
Theorem~\ref{thm:single_arm_martingale} holds.
Assume the arm-$w$ residual jumps are uniformly bounded, propensities are
bounded away from zero, and $V_t(w) \to \infty$.
Then, on a potentially enriched probability space, there exist independent
standard Gaussians $(G_i)_{i \geq 1}$ such that:
\begin{equation}
    \begin{split}
        &\hat{r}_t(w) - r_t(w) =\\
        & \sum_{i : t_i(w) \leq t} \nu_i G_i
        + o_{a.s.}\left( V_t(w)^{3/8}\log V_t(w) \right),
    \end{split}
\end{equation}
where $\nu_i^2 = \frac{1-\pi_i(w)}{\pi_i(w)} e_{i,t_i(w)}(w)^2$.
\end{lemma}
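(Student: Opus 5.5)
The plan is to reduce the continuous-time statement to a discrete-time martingale with independent-looking increments, and then invoke Strassen's strong invariance principle for martingales. Under the event-time condition of Theorem~\ref{thm:single_arm_martingale}, each unit-level error $M_{it}(w)=Z_i(w)e_{it}(w)$ equals $0$ for $t<t_i(w)$. For $t\ge t_i(w)$ it equals the constant $\xi_i\coloneqq Z_i(w)e_{i,t_i(w)}(w)$. Hence
\[
\hat r_t(w)-r_t(w)=\sum_{i\in\mathcal{E}_t:\,t_i(w)\le t}\xi_i .
\]
Since $t_i(w)\ge E_i$, the entry condition is implied, and the sum is over $\{i:t_i(w)\le t\}$. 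I would order the arm-$w$ potential events as $t_{\sigma(1)}(w)\le t_{\sigma(2)}(w)\le\cdots$, breaking ties by a fixed rule. I would then define the discrete filtration $\mathcal{D}_k\coloneqq\mathcal{F}_{t_{\sigma(k)}(w)}(w)$ and the partial sums $S_k=\sum_{j\le k}\xi_{\sigma(j)}$.

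\textbf{Step 1: martingale difference structure.} By Theorem~\ref{thm:single_arm_martingale}, $\mathbb{E}[\xi_{\sigma(k)}\mid\mathcal{D}_{k-1}]=0$, because $e_{\sigma(k),t_{\sigma(k)}(w)}(w)$ is measurable from the strict past and $Z_{\sigma(k)}(w)$ is mean zero given that past. The conditional variance is $\frac{1-\pi_{\sigma(k)}(w)}{\pi_{\sigma(k)}(w)}e^2_{\sigma(k),t_{\sigma(k)}(w)}(w)=\nu_{\sigma(k)}^2$. Thus $\sum_{j\le k}\nu_{\sigma(j)}^2=V_{t}(w)$ for $t\in[t_{\sigma(k)}(w),t_{\sigma(k+1)}(w))$, which is the predictable quadratic variation. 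Bounded residuals and $\pi_i(w)>\pi_{\min}$ give $|\xi_i|\le C$ uniformly. Simultaneous events are handled by the tie-breaking, and each sub-step is still a martingale difference given the same filtration.

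\textbf{Step 2: strong approximation.} I would apply Strassen's martingale Skorokhod-embedding invariance principle (1967) to $(S_k,\mathcal{D}_k)$. This embeds $S_k=B(\tau_k)$ for a Brownian motion $B$ on an enlarged space, with stopping times $\tau_k$ satisfying $\tau_k-\tau_{k-1}$ of conditional mean $\nu^2_{\sigma(k)}$ and conditional fourth moments $\lesssim C^4$. Uniformly bounded increments make the Lindeberg/fourth-moment type conditions trivially satisfied. With $V\to\infty$, Strassen's theorem then gives
\[
S_k-B\Big(\sum_{j\le k}\nu_{\sigma(j)}^2\Big)=o_{a.s.}\big(V^{3/8}\log V\big),
\]
where $V=\sum_{j\le k}\nu^2_{\sigma(j)}$ (up to the exact log factor stated in the theorem). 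I would then set $G_{\sigma(k)}\coloneqq\nu_{\sigma(k)}^{-1}\big[B(\sum_{j\le k}\nu^2_{\sigma(j)})-B(\sum_{j<k}\nu^2_{\sigma(j)})\big]$. These are independent standard Gaussians because the $\nu$'s are fixed given the conditioning, or more generally predictable, in which case one uses independent increments of $B$ over predictable intervals. If $\nu_i=0$, I would take $G_i$ to be an auxiliary independent Gaussian. The approximating sum is then exactly $\sum_{i:t_i(w)\le t}\nu_iG_i$.

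\textbf{Step 3: pass back to calendar time.} Between consecutive event times both sides are constant, and $V_t(w)$ is the same as the discrete clock. Hence the discrete-index bound transfers to all $t$.

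The main obstacle is Step 2's error control. The $o(V^{3/8}\log V)$ rate comes from bounding $\tau_k-\sum_j\nu_j^2$, which is itself a martingale with bounded fourth moments, by $O(V^{3/4}(\log V)^{c})$ a.s. This is done via a martingale SLLN, followed by the Brownian modulus-of-continuity bound $|B(s+h)-B(s)|=O(\sqrt{h\log s})$. The subtlety in the design-based setting is that the $\nu$'s are random through $\mathcal{F}(w)$-measurable augmentations rather than fixed. For that reason I would cite Strassen's result in its predictable-variance form rather than the i.i.d. version.
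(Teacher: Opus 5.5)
The paper gives no argument for this lemma beyond the citation to Strassen. Your reduction is the natural unpacking of that citation: order the arm-$w$ potential event times, form the discrete martingale $S_k$ with conditional variances $\nu_{\sigma(k)}^2$, apply Strassen's martingale Skorokhod embedding, and pass back to calendar time. In the IPW case each $\nu_i$ is a fixed number given the conditioning. Your $G_{\sigma(k)}$ are then normalized increments of $B$ over deterministic disjoint intervals, so they are i.i.d.\ standard Gaussians and the argument goes through.

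\textbf{The gap: random, predictable $\nu$'s.} The lemma's hypothesis (the general event-time condition, e.g.\ a running-mean augmentation) allows $\nu$'s that are random but predictable. Predictability of $V_k=\sum_{j\le k}\nu_{\sigma(j)}^2$ with respect to $\mathcal{D}_{k-1}$ does not make $V_{k-1}$ or $V_k$ stopping times of the embedding Brownian motion. The information in $\mathcal{D}_{k-1}$ enters the Brownian filtration at $\tau_{k-1}$, not at $V_{k-1}$, and $\tau_{k-1}$ can exceed $V_{k-1}$ (or even $V_k$). So the path of $B$ on $[V_{k-1},V_k]$ is correlated with the variable that selects the interval, and ``independent increments over predictable intervals'' is false.

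Concretely, let $X_1$ be Rademacher, embedded at the exit time $\tau_1$ of $(-1,1)$, so $\nu_1=1$. Let $\nu_2^2=1$ if $X_1=1$ and $\nu_2^2=4$ if $X_1=-1$. It\^o's isometry gives $\mathbb{E}[B(s)\mid X_1=\pm1]=\pm\mathbb{E}[s\wedge\tau_1]$. Hence $\mathbb{E}[G_2]=\tfrac14\bigl(\int_1^2\mathbb{P}(\tau_1>s)\,ds-\int_2^5\mathbb{P}(\tau_1>s)\,ds\bigr)\approx 0.03\neq 0$, so $G_2$ is not even centered.

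\textbf{Repair.} Do not decompose. Strassen's theorem gives $S_k=B(V_k)+o_{a.s.}(\cdot)$ for a standard Brownian motion $B$, and that is all Theorem~\ref{thm:asymptotic_arm_confidence_sequence} needs. The Gaussian-mixture martingale for $B$ in continuous time, plus Ville's inequality, yields $|B(s)|<b(s;\alpha)$ for all $s\ge 0$ simultaneously, hence at the random times $V_t(w)$. Mere mutual independence of the $G_i$ would not suffice downstream anyway when $\nu$ is random: the mixture step needs each $G_i$ to be standard normal conditionally on $\nu_i$ and the past.

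\textbf{Ties.} With $\mathcal{D}_k=\mathcal{F}_{t_{\sigma(k)}(w)}(w)$, tied event times give $\mathcal{D}_{k-1}=\mathcal{D}_k$. Then $\xi_{\sigma(k)}$ is $\mathcal{D}_{k-1}$-measurable and Step 1 fails at ties. Take instead $\mathcal{D}_k=\sigma(w_{\sigma(1)},\dots,w_{\sigma(k)})$, since everything else is conditioned on; this works by independence of the assignments.

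Your rate bookkeeping is conservative but adequate.
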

Such strong approximations are now standard in developing asymptotic anytime-valid procedures \citep{waudbysmith,bibaut, ham}.

Following \citet{waudbysmith},
we say that intervals $(\hat{\theta}_t \pm L_t)_{t}$ form a $(1-\alpha)$-asymptotic
confidence sequence for $(\theta_t)_{t}$ if they approximate
some nonasymptotic $(1-\alpha)$-confidence sequence with bounds $L_t^*$ such that $L_t^*/L_t \to 1$ almost surely.
Leveraging the strong approximation and a mixture martingale provided in the appendix, we establish the following result for any estimator satisfying the event-time condition of Theorem~\ref{thm:single_arm_martingale}.
\begin{theorem}[Asymptotic Confidence Sequence for $r_t(w)$]
    \label{thm:asymptotic_arm_confidence_sequence}
    Assume outcomes are bounded ($|y_i(w)| \leq B$), propensities are bounded away from zero ($\pi_i(w) \geq \pi_{\min} > 0$), and $V_t(w) \to \infty$ as $t \to \infty$. For any $\eta^2 > 0$, define the boundary for the error process $\hat{r}_t(w) - r_t(w)$ as
    \begin{equation}
        b(V;\alpha) = \sqrt{\frac{V\eta^2 + 1}{\eta^2} \log\left(\frac{V\eta^2 + 1}{\alpha^2}\right)}.
    \end{equation}
    Then $\hat{r}_t(w) \pm b(\hat{V}_t(w);\alpha)$ is a $(1-\alpha)$ asymptotic confidence process for $r_t(w)$.
\end{theorem}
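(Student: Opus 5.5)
The plan follows the standard route of \citet{waudbysmith}: couple the error process to a Gaussian martingale, build an exact confidence sequence for that Gaussian process, and transfer it back with negligible loss.

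\textbf{Step 1: Gaussian coupling.} By Theorem~\ref{thm:single_arm_martingale}, $M_t(w)=\hat r_t(w)-r_t(w)$ is an $\mathcal{F}_t(w)$-martingale. It jumps only at the potential event times $t_i(w)$, and each jump equals $Z_i(w)e_{i,t_i(w)}(w)$. Boundedness gives $|Z_i(w)|\le 1/\pi_{\min}$, and the residual jumps are bounded by assumption, so the jumps are uniformly bounded. Lemma~\ref{lemma:strong_approx} therefore applies and gives
\[
M_t(w)=S_t+o_{a.s.}\bigl(V_t(w)^{3/8}\log V_t(w)\bigr),\qquad S_t:=\sum_{i:t_i(w)\le t}\nu_iG_i .
\]
Since the jumps $\nu_i^2$ sum exactly to $V_t(w)$, the process $S_t$ is a Brownian motion $B$ evaluated at the time change $V_t(w)$.

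\textbf{Step 2: exact boundary for the Gaussian process.} Mix the exponential martingale $\exp(\lambda S_t-\lambda^2V_t/2)$ over $\lambda\sim N(0,\eta^2)$. This yields the nonnegative martingale
\[
(V_t\eta^2+1)^{-1/2}\exp\!\left(\frac{\eta^2S_t^2}{2(V_t\eta^2+1)}\right).
\]
Ville's inequality then gives, with probability at least $1-\alpha$ and simultaneously for all $t$, $|S_t|\le b(V_t(w);\alpha)$, with $b$ exactly as in the statement. The argument is the usual one for Brownian motion observed at the stopping times $V_t$: the mixture martingale in continuous time $v$ is evaluated along the nondecreasing time change, so optional sampling preserves the bound. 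This gives the nonasymptotic sequence $L_t^*=b(V_t(w);\alpha)$ for $S_t$.

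\textbf{Step 3: transfer and plug-in.} Two comparisons must hold almost surely. First, the coupling error is negligible relative to the boundary: $V^{3/8}\log V=o\bigl(\sqrt{V\log V}\bigr)$ and $b(V;\alpha)\asymp\sqrt{V\log V}$. Hence $\hat r_t(w)\pm\bigl(b(V_t;\alpha)+o(b)\bigr)$ covers $r_t(w)$ uniformly with probability at least $1-\alpha$. Second, and this is the main obstacle, we need $b(\hat V_t(w);\alpha)/b(V_t(w);\alpha)\to1$ a.s. Since $b$ depends on $V$ like $\sqrt{V\log V}$, it suffices to show $\hat V_t(w)/V_t(w)\to1$ a.s.

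To prove that, write
\[
\hat V_t-V_t=\sum_{i:t_i(w)\le t}(1-\pi_i)e_{i,t_i(w)}^2\Bigl(\frac{\bone[w_i=w]}{\pi_i^2}-\frac{1}{\pi_i}\Bigr).
\]
This is itself an $\mathcal{F}_t(w)$-martingale with bounded increments, because each summand has conditional mean zero when $w_i$ is revealed at $t_i(w)$. Its predictable quadratic variation is at most $C\,V_t(w)$, with $C$ depending only on $\pi_{\min}$ and the residual bound. A martingale strong law (e.g.\ Freedman's inequality, or the strong law for martingales with $\langle\cdot\rangle\to\infty$) then gives $\hat V_t-V_t=o(V_t)$ a.s. The assumption $V_t(w)\to\infty$ is used here.

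Combining the two comparisons, the interval $\hat r_t(w)\pm b(\hat V_t(w);\alpha)$ matches the exact sequence $L_t^*$ up to a ratio tending to one. By the definition following \citet{waudbysmith}, it is therefore a $(1-\alpha)$ asymptotic confidence sequence.
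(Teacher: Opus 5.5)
Your proposal is correct and follows essentially the same route as the paper's proof: the Strassen coupling to $\sum_{i:t_i(w)\le t}\nu_i G_i$, the normal-mixture martingale with Ville's inequality yielding $b(V_t(w);\alpha)$, and a martingale strong law for $\hat V_t(w)-V_t(w)$ giving $\hat V_t(w)/V_t(w)\to 1$, hence $b(\hat V_t(w);\alpha)/b(V_t(w);\alpha)\to 1$. Your bound $\langle \hat V(w)-V(w)\rangle_t\le C\,V_t(w)$ makes the strong-law step more precise than the paper, which appeals only to bounded increments.
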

The boundary $b(V;\alpha)$ is the normal-mixture boundary from \citet{howard2021time} and is used similarly in \citet{waudbysmith, ham}.
Note that the asymptotic definition above requires only that the ratio $L_t^*/L_t \to 1$, not that the intervals shrink.
The intervals need not shrink on the cumulative scale, because $r_t(w)$ itself grows with $t$.
If desired, one can divide both arm-level intervals by $N_t$ to obtain intervals for average cumulative reward.
Under bounded residuals and growing $N_t$, these rescaled intervals tighten with time.
This common rescaling does not change whether the two arm-level intervals overlap at a fixed time.
We therefore work on the cumulative scale, which matches the objective of comparing total rewards.

\subsection{Multiple-Arm Inference}
\label{section:not_a_martingale}
Under the event-time condition of Theorem~\ref{thm:single_arm_martingale}, the arm-level errors are martingales, but with respect to different filtrations:
$(\hat{r}_t(1) - r_t(1))_t$ is a martingale under $\mathcal{F}_t(1)$, while $(\hat{r}_t(0) - r_t(0))_t$ is a martingale under $\mathcal{F}_t(0)$.
The treatment-effect error
\[
    \hat{\Delta}_{t} - \Delta_{t} = (\hat{r}_t(1) - r_t(1)) - (\hat{r}_t(0) - r_t(0))
\]
therefore cannot be handled by simply subtracting two martingales under a common filtration.
The following covariance argument shows a stronger obstruction: generically, $\hat{\Delta}_{t} - \Delta_{t}$ is not a martingale under any filtration.

\begin{lemma}
    \label{lemma:martingale_covariance}
    Let $(X_t)_{t \geq 0}$ be a square-integrable martingale with respect to some filtration, and suppose $X_0 = 0$.
    Then, for any $0 \le s \le t$,
    \begin{equation}
        \mathrm{Cov}(X_s, X_t) = \mathrm{Var}(X_s).
    \end{equation}
\end{lemma}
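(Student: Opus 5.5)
The plan is the standard orthogonal-increments argument. Let $(\mathcal{A}_t)_t$ be the filtration under which $X$ is a martingale, and fix $0 \le s \le t$.

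First, I will record the mean. The martingale property gives $\mathbb{E}[X_t] = \mathbb{E}[X_0] = 0$ for all $t$, so
\[
\mathrm{Cov}(X_s,X_t)=\mathbb{E}[X_sX_t], \qquad \mathrm{Var}(X_s)=\mathbb{E}[X_s^2].
\]
Both expectations are finite: $\mathbb{E}[X_s^2]$ by square-integrability, and $\mathbb{E}[|X_sX_t|]$ by Cauchy--Schwarz.

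Next, I will write $X_t = X_s + (X_t - X_s)$. This reduces the claim to showing that $\mathbb{E}[X_s(X_t-X_s)] = 0$, which I will get by conditioning on $\mathcal{A}_s$. Because $X_s$ is $\mathcal{A}_s$-measurable and $X_sX_t$ is integrable, the tower property and "taking out what is known" give
\[
\mathbb{E}[X_sX_t] = \mathbb{E}\big[X_s\,\mathbb{E}[X_t\mid\mathcal{A}_s]\big] = \mathbb{E}[X_s^2].
\]
The last step uses the martingale identity $\mathbb{E}[X_t \mid \mathcal{A}_s] = X_s$.

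Combining the two displays yields $\mathrm{Cov}(X_s,X_t)=\mathrm{Var}(X_s)$. There is no real obstacle here. The only point needing care is integrability, to justify pulling $X_s$ out of the conditional expectation, and square-integrability together with Cauchy--Schwarz handles it.
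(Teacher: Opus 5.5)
Your proof is correct and follows the paper's argument essentially step for step: you use $\mathbb{E}[X_t]=0$ to reduce the covariance to $\mathbb{E}[X_sX_t]$, condition on the time-$s$ $\sigma$-algebra, pull out $X_s$, and apply $\mathbb{E}[X_t\mid\mathcal{A}_s]=X_s$. Your Cauchy--Schwarz integrability check, which justifies pulling $X_s$ out of the conditional expectation, is a small rigor addition that the paper leaves implicit.
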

\begin{proposition}
    \label{prop:covariance}
    The covariance of the error process is
    \begin{align}
    \label{eq:covariance}
        &\mathrm{Cov}(\hat\Delta_s - \Delta_s,\hat\Delta_t - \Delta_t) = \mathrm{Cov}(\hat\Delta_s,\hat\Delta_t) \notag\\
        &= \sum_{i \in \mathcal{E}_t} \mathbb{E}[\hat\Delta_{is}\hat\Delta_{it}] - \Delta_{is}\Delta_{it}\notag\\
        &=
        \sum_{i \in \mathcal{E}_t}\left\{
        \frac{e_{is}(1)e_{it}(1)}{\pi_i(1)}+\frac{e_{is}(0)e_{it}(0)}{\pi_i(0)}\right.\notag\\
        &\quad\left.-(e_{is}(1) - e_{is}(0))(e_{it}(1) - e_{it}(0))
        \right\},
    \end{align}
    while the variance is
    \begin{align}
    \label{eq:variance}
        &\mathrm{Var}(\hat\Delta_s - \Delta_s) = \sum_{i \in \mathcal{E}_s}\left\{
        \frac{e_{is}(1)^2}{\pi_i(1)}+\frac{e_{is}(0)^2}{\pi_i(0)}\right.\notag\\
        &\quad\left.-(e_{is}(1) - e_{is}(0))^2
        \right\}.
    \end{align}
\end{proposition}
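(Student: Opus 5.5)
We prove Proposition~\ref{prop:covariance} by direct computation, unit by unit. Throughout, $s \le t$ and the potential outcomes and augmentations $m_{it}(w)$ are treated as fixed (conditioned on), so the only randomness is the independent assignments $w_i$.

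\textbf{Step 1: reduce to per-unit moments.} Since $\Delta_s$ and $\Delta_t$ are deterministic, $\mathrm{Cov}(\hat\Delta_s - \Delta_s, \hat\Delta_t - \Delta_t) = \mathrm{Cov}(\hat\Delta_s, \hat\Delta_t)$. Write
\[
\hat\Delta_s = \sum_{i \in \mathcal{E}_s} \hat\Delta_{is}
\quad\text{and}\quad
\hat\Delta_t = \sum_{j \in \mathcal{E}_t} \hat\Delta_{jt}.
\]
Terms with $i \neq j$ vanish by independence of assignments. For $i \in \mathcal{E}_t \setminus \mathcal{E}_s$ we have $E_i > s$, so no $\hat\Delta_{is}$ term exists. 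We adopt the convention that $r_{is}(w)$, $m_{is}(w)$ and $e_{is}(w)$ are zero for such units, which lets us write the sum over $\mathcal{E}_t$ as in the statement. Then, using unbiasedness $\mathbb{E}[\hat\Delta_{it}] = \Delta_{it}$, the covariance equals $\sum_{i \in \mathcal{E}_t} \big( \mathbb{E}[\hat\Delta_{is}\hat\Delta_{it}] - \Delta_{is}\Delta_{it} \big)$.

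\textbf{Step 2: compute the cross moment.} Set $D_i = \bone[w_i=1]$, $a_i = 1/\pi_i(1)$ and $b_i = 1/\pi_i(0)$. Then
\[
\hat\Delta_{iu} = m_{iu}(1) - m_{iu}(0) + D_i a_i e_{iu}(1) - (1-D_i) b_i e_{iu}(0).
\]
Write $\hat\Delta_{iu} = c_{iu} + \xi_{iu}$, where $c_{iu} = m_{iu}(1) - m_{iu}(0)$ and $\xi_{iu} = D_i a_i e_{iu}(1) - (1-D_i) b_i e_{iu}(0)$. Note $\mathbb{E}[\xi_{iu}] = e_{iu}(1) - e_{iu}(0)$. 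The constants contribute nothing to the covariance, so it suffices to compute $\mathrm{Cov}(\xi_{is}, \xi_{it}) = \mathbb{E}[\xi_{is}\xi_{it}] - \mathbb{E}[\xi_{is}]\mathbb{E}[\xi_{it}]$. Because $D_i(1-D_i) = 0$, the product $\xi_{is}\xi_{it}$ has no cross terms. Its expectation is
\[
\pi_i(1) a_i^2 e_{is}(1)e_{it}(1) + \pi_i(0) b_i^2 e_{is}(0)e_{it}(0)
= \frac{e_{is}(1)e_{it}(1)}{\pi_i(1)} + \frac{e_{is}(0)e_{it}(0)}{\pi_i(0)}.
\]
Subtracting $(e_{is}(1)-e_{is}(0))(e_{it}(1)-e_{it}(0))$ gives exactly the summand in \eqref{eq:covariance}.

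\textbf{Step 3: the variance.} Setting $t = s$ in Step~2, and noting that only units in $\mathcal{E}_s$ contribute, yields \eqref{eq:variance}.

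The only delicate point is the bookkeeping in Step~1. We must justify dropping the $i \neq j$ terms, which follows from independence of the assignments together with the determinism of the $m_{iu}(w)$. We must also handle units that entered between $s$ and $t$ via the zero convention. Everything else is the elementary Bernoulli moment calculation above.
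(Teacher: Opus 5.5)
Your proof is correct and takes the same route as the paper. Independence of assignments reduces the covariance to per-unit covariances, the deterministic term $m_{iu}(1)-m_{iu}(0)$ is split off, and the cross moment follows from the two cases $w_i=1$ and $w_i=0$. (Computing $\mathrm{Cov}(\xi_{is},\xi_{it})$ directly, rather than expanding and cancelling as the paper does, is the same calculation.) Your zero convention for units entering in $(s,t]$ and your explicit assumption that the $m_{iu}(w)$ are fixed make precise bookkeeping that the paper leaves implicit.
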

By the contrapositive of Lemma~\ref{lemma:martingale_covariance}, if there exist $s < t$ such that
\[
    \mathrm{Cov}(X_s, X_t) \neq \mathrm{Var}(X_s),
\]
then the process $(X_t)_t$ cannot be a martingale with respect to any filtration.
Proposition~\ref{prop:covariance} shows that this covariance identity fails for $\hat\Delta_t - \Delta_t$ in general.
Thus the treatment-effect error is not, in general, a martingale under any filtration.
The martingale ``violation'' $\mathrm{Cov}(\hat\Delta_s, \hat\Delta_t) - \mathrm{Var}(\hat\Delta_{\min(s,t)})$ for Dataset~\ref{dataset:simulation} is visualized in Figure~\ref{fig:covariance_example} (Appendix~\ref{app:supplemental_figures}).
The same obstruction applies to both IPW and oracle event-time AIPW, except in degenerate cases where the residual products cancel.

Since the treatment-effect error is not itself a martingale, we construct valid inference arm by arm.
Applying Theorem~\ref{thm:asymptotic_arm_confidence_sequence} separately to $r_t(1)$ and $r_t(0)$ and then taking a union bound yields a confidence sequence for $\Delta_t$.

\begin{theorem}[Asymptotic Confidence Sequence for $\Delta_t$]
    \label{thm:asymptotic_difference_confidence_sequence}
    Under the assumptions of Theorem~\ref{thm:asymptotic_arm_confidence_sequence} for both $w=0$ and $w=1$, let $b(V;\alpha)$ be defined as in that theorem.
    Then $\hat{\Delta}_t \pm \left(b(\hat{V}_t(0);\alpha/2)+ b(\hat{V}_t(1);\alpha/2)\right)$ is a $(1-\alpha)$ asymptotic confidence process for $\Delta_t$.
\end{theorem}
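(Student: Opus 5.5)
The plan is to apply Theorem~\ref{thm:asymptotic_arm_confidence_sequence} separately to each arm at level $\alpha/2$, combine the two guarantees with a union bound, and then check that the asymptotic definition survives the summation of radii. No common filtration is needed. The definition of an asymptotic confidence sequence only asks for a coupling with some nonasymptotic confidence sequence, and each arm supplies its own coupling on its own filtration $\mathcal{F}_t(w)$. The union bound is a statement about probabilities of events on a single (enriched) probability space, so the lack of joint martingale structure established by Proposition~\ref{prop:covariance} does not matter.

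\textbf{Step 1: per-arm nonasymptotic sequences.} For each $w\in\{0,1\}$, Theorem~\ref{thm:asymptotic_arm_confidence_sequence} at level $\alpha/2$ gives a nonasymptotic sequence with radius $L_t^*(w)$. This sequence is constructed through Lemma~\ref{lemma:strong_approx} and the normal-mixture boundary, and it satisfies
\[
\mathbb{P}\bigl(\forall t:\ |\hat r_t(w)-r_t(w)|\le L^*_t(w)\bigr)\ge 1-\alpha/2,
\]
together with $L^*_t(w)/b(\hat V_t(w);\alpha/2)\to 1$ almost surely.

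Here is one point I need to settle. Each strong approximation lives on its own enriched space. I will place both Gaussian couplings on a common product enrichment of the assignment space. This is legitimate because each coupling only needs the marginal law of its arm's error process to be preserved, and the union bound only requires both events to be defined on the same space.

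\textbf{Step 2: union bound.} Let $A_w$ be the event in the display above. Then
\[
\mathbb{P}(A_0\cap A_1)\ge 1-\alpha.
\]
On $A_0\cap A_1$, the triangle inequality gives
\[
|\hat\Delta_t-\Delta_t|\le |\hat r_t(1)-r_t(1)|+|\hat r_t(0)-r_t(0)|\le L_t^*(1)+L_t^*(0)
\]
for all $t$ simultaneously. So $L^*_t:=L^*_t(0)+L^*_t(1)$ is the radius of a nonasymptotic $(1-\alpha)$ confidence sequence for $\Delta_t$.

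\textbf{Step 3: ratio convergence.} Write $b_w=b(\hat V_t(w);\alpha/2)$. The remaining task is to show $L^*_t/(b_0+b_1)\to 1$ almost surely. This follows from the elementary mediant bound
\[
\min_w \frac{L^*_t(w)}{b_w}\ \le\ \frac{L^*_t(0)+L^*_t(1)}{b_0+b_1}\ \le\ \max_w \frac{L^*_t(w)}{b_w},
\]
which holds since all terms are positive. Both the minimum and the maximum converge to $1$ almost surely, so the middle ratio does too, without any assumption on the relative growth of $\hat V_t(0)$ and $\hat V_t(1)$.

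\textbf{Main obstacle.} Steps 2 and 3 are routine. The only real subtlety is Step 1's joint-coupling issue: making sure the two arm-level approximations, built under different filtrations, can be realized simultaneously. I would handle it as described above, by the product enrichment together with the fact that the union bound never uses any dependence structure between the arms.
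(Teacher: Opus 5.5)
Your proposal is correct and follows the paper's proof, which is exactly this: apply Theorem~\ref{thm:asymptotic_arm_confidence_sequence} to each arm at level $\alpha/2$ and take a union bound. Your additional steps fill in details the paper leaves implicit: the triangle inequality, the mediant bound giving $L_t^*/(b_0+b_1)\to 1$ with no assumption on the relative growth of $\hat V_t(0)$ and $\hat V_t(1)$, and placing both Gaussian couplings on a common extension of the assignment space (by the standard transfer of each coupling to $\Omega\times[0,1]$, which keeps the original assignments and hence the joint law of the two arm errors intact).
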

The proof follows by applying Theorem~\ref{thm:asymptotic_arm_confidence_sequence} separately to each arm at level $\alpha/2$ and taking a union bound.

\begin{corollary}[Asymptotic Sequential P-Value]
    \label{cor:sequential_pvalue}
    Consider testing the null hypothesis $H_0: \Delta_t = 0$ (no treatment effect at time $t$).
    The sequential p-value is defined as
    \begin{align}
        p_t = \inf\{&\alpha \in (0,1] :
        |\hat{\Delta}_t| > b(\hat{V}_t(0);\alpha/2) \notag\\
        &\quad + b(\hat{V}_t(1);\alpha/2)\},
    \end{align}
    with the convention that $p_t=1$ if the set is empty.
    When the set is nonempty and the infimum is attained in $(0,1)$, $p_t$ equivalently solves
    \begin{equation}
        |\hat{\Delta}_t| = b(\hat{V}_t(0);p_t/2) + b(\hat{V}_t(1);p_t/2).
    \end{equation}
\end{corollary}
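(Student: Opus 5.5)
The corollary is a statement about the sequential p-value. I would prove two things: that $p_t$ is a well-defined anytime-valid p-value process obtained by inverting the confidence sequence of Theorem~\ref{thm:asymptotic_difference_confidence_sequence}, and that it admits the stated characterization as a root.

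First I would record the monotonicity properties of the boundary. For fixed $V>0$ and $\eta^2>0$, $b(V;\alpha)$ is continuous in $\alpha$ and strictly decreasing in $\alpha$ on the range where the logarithm is positive, because $\log((V\eta^2+1)/\alpha^2)$ is strictly decreasing in $\alpha$. Hence the sum
\[
g_t(\alpha) \coloneqq b(\hat V_t(0);\alpha/2)+b(\hat V_t(1);\alpha/2)
\]
is continuous and strictly decreasing in $\alpha$. The set $A_t=\{\alpha\in(0,1]: |\hat\Delta_t|>g_t(\alpha)\}$ is then an up-set: if $\alpha\in A_t$ and $\alpha'>\alpha$, then $\alpha'\in A_t$. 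So $A_t$ is an interval of the form $(p_t,1]$.

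Next comes the root characterization. When $A_t$ is nonempty and $p_t\in(0,1)$, continuity of $g_t$ does the work. Points $\alpha\downarrow p_t$ lie in $A_t$, giving $|\hat\Delta_t|\ge g_t(p_t)$. Points $\alpha<p_t$ do not lie in $A_t$, giving $|\hat\Delta_t|\le g_t(\alpha)\to g_t(p_t)$ as $\alpha\uparrow p_t$. Together these give $|\hat\Delta_t|=g_t(p_t)$, and strict monotonicity makes this root unique.

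Finally, for validity, I would use the duality $p_t\le\alpha \iff |\hat\Delta_t|\ge g_t(\alpha)$, which follows from the up-set structure up to the boundary case. Under $H_0$ with $\Delta_t=0$, the event $\{\exists t: p_t\le\alpha\}$ is contained in the event that $0$, the true $\Delta_t$, falls outside or on the boundary of the interval $\hat\Delta_t\pm g_t(\alpha)$ for some $t$. By Theorem~\ref{thm:asymptotic_difference_confidence_sequence} this event has asymptotic probability at most $\alpha$. That is the sense in which $p_t$ is an asymptotic anytime-valid p-value.

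The main obstacle is the asymptotic qualifier. The confidence-sequence guarantee holds only in the coupled sense $L_t^*/L_t\to1$. I would therefore state validity as holding for the nonasymptotic boundaries $L_t^*$ and transfer it through the ratio limit, as in \citet{waudbysmith}, rather than claim exact finite-sample control. A minor technicality is where $\log$ turns negative for $\alpha$ near $1$, which would make $b$ undefined. I would handle this with the convention that $b$ is the positive part, which only affects the empty-set case where $p_t=1$.
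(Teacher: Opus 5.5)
Your proposal is correct and follows essentially the route the paper intends (the paper gives no separate proof): the corollary is the inversion of the union-bound confidence sequence of Theorem~\ref{thm:asymptotic_difference_confidence_sequence}, and continuity plus strict monotonicity of $b$ in $\alpha$ give the up-set structure and the root characterization exactly as you argue. One small correction: the logarithm never becomes negative, because $b$ is only evaluated at $\alpha/2\le 1/2$ with $\hat V_t(w)\ge 0$, so $\log\bigl((\hat V_t(w)\eta^2+1)/(\alpha/2)^2\bigr)\ge\log 4>0$ and no positive-part convention is needed.
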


\begin{remark}[Cross-Arm Dependence in Fixed-Sample and Sequential Settings]
\label{rem:cross_arm_dependence}
The union bound approach mirrors a familiar phenomenon in fixed-sample design-based inference.
In the classical setting, the variance of $\hat{\Delta}_{it}$ decomposes as
\begin{align}
    \mathrm{Var}(\hat{\Delta}_{it}) &= \mathrm{Var}(\hat{r}_{it}(1)) + \mathrm{Var}(\hat{r}_{it}(0)) \notag\\
    &\quad - 2\mathrm{Cov}(\hat{r}_{it}(1), \hat{r}_{it}(0)),
\end{align}
where $\mathrm{Cov}(\hat{r}_{it}(1), \hat{r}_{it}(0)) = -e_{it}(1)e_{it}(0)$ arises from the perfect negative correlation between $\bone[w_i=1]$ and $\bone[w_i=0]$.
This covariance term involves both potential outcomes and is therefore unestimable, forcing practitioners to use the variance upper bound $\sigma_{it}^2$ which drops it.

In the sequential setting, the same cross-arm dependence manifests differently: the covariance structure $\mathrm{Cov}(\hat{\Delta}_s, \hat{\Delta}_t) \neq \mathrm{Var}(\hat{\Delta}_s)$ for $s < t$ (Proposition~\ref{prop:covariance}) violates the martingale property under any filtration.
Both phenomena stem from a single root cause: treatment assignment induces dependence between arm estimators that cannot be jointly characterized.
In fixed-sample inference, this dependence renders a variance component unestimable; in sequential inference, it precludes the existence of a common filtration under which both arm processes are martingales.

The union bound is thus the sequential analog of the variance upper bound: both treat each arm separately, where valid inference is available, and combine the results conservatively.
\end{remark}

\subsection{Union Bound vs Variance Upper Bound}
The union-bound construction may appear conservative because it avoids a direct boundary for $\hat{\Delta}_t-\Delta_t$.
However, the natural direct variance for this error process is not estimable even at a fixed time.
As described in Section~3, the exact design-based variance in \eqref{eq:te_variance} contains the cross-potential residual product $e_{it}(1)e_{it}(0)$, while random assignment reveals only one potential residual per unit.
Classical practice therefore uses the upper bound $\sigma_{it}^2$ in \eqref{eq:variance_upper_bound}.
Here we compare our valid sequential boundary to the corresponding cumulative upper-bound benchmark.
For constant assignment probability $\pi_i(1)=\pi$, write
\begin{align}
    \sigma_t^2
    &\coloneqq \sum_{i\in\mathcal E_t}\sigma_{it}^2 \notag\\
    &=
    \sum_{i\in\mathcal E_t}
    \left\{
        \frac{e_{it}(1)^2}{\pi}
        +
        \frac{e_{it}(0)^2}{1-\pi}
    \right\}
    =
    \frac{V_t(1)}{1-\pi}+\frac{V_t(0)}{\pi}.
\end{align}
Because Proposition~\ref{prop:covariance} shows that $\hat{\Delta}_t-\Delta_t$ is not generally a martingale, $b(\sigma_t^2;\alpha)$ should be read as a benchmark rather than a valid confidence-sequence boundary.
The next proposition shows that, asymptotically, the valid union-bound boundary need not be wider than this benchmark, and can be narrower when the arm-level variance clocks are imbalanced.

\begin{proposition}[Relative Width: Union vs. Variance Upper Bound]
\label{prop:relative_width_generalized}
Let $\alpha \in (0, 1)$, $\eta > 0$, and let the treatment assignment probability be $\pi_i(1) = \pi \in (0, 1)$ for all $i$.
Define the generalized relative width function $R_{\pi}(V_0, V_1)$ as:
\begin{equation}
R_{\pi}(V_0, V_1) = \frac{b(V_0; \alpha/2) + b(V_1; \alpha/2)}{b\left(\frac{V_1}{1-\pi} + \frac{V_0}{\pi}; \alpha\right)}
\end{equation}
If $V_0+V_1\to\infty$ and $V_0/(V_0+V_1)\to\rho\in[0,1]$, then
\begin{equation}
    R_{\pi}(V_0,V_1)
    \to
    \frac{\sqrt{\rho}+\sqrt{1-\rho}}
    {\sqrt{\rho/\pi+(1-\rho)/(1-\pi)}}
    \leq 1.
\end{equation}
Consequently, $R_{\pi}(V,V)\to 2\sqrt{\pi(1-\pi)}$ in the symmetric case, $R_{\pi}(V_0,V_1)\to\sqrt{1-\pi}$ when $V_0=o(V_1)$, and $R_{\pi}(V_0,V_1)\to\sqrt{\pi}$ when $V_1=o(V_0)$.
\end{proposition}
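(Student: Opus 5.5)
The plan is to isolate the leading-order behaviour of the normal-mixture boundary as $V\to\infty$. Write $S=V_0+V_1$, $V_0=\rho_S S$ and $V_1=(1-\rho_S)S$, with $\rho_S\to\rho$. From the definition in Theorem~\ref{thm:asymptotic_arm_confidence_sequence},
\[
b(V;\alpha)^2=\Bigl(V+\tfrac{1}{\eta^2}\Bigr)\Bigl(\log(V\eta^2+1)+2\log(1/\alpha)\Bigr).
\]
The key intermediate claim is therefore the following. Whenever $V=cS$ with $c$ bounded, $b(V;\alpha)=\sqrt{cS\log S}\,(1+o(1))+O(\sqrt{S})$ uniformly in $c$, because $\log(cS\eta^2+1)\le\log S+O(1)$. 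When $c\to c_\infty>0$, this sharpens to $b(V;\alpha)\sim\sqrt{c_\infty S\log S}$. The level $\alpha$ versus $\alpha/2$ only enters through the additive constant $2\log(1/\alpha)$, which is negligible next to $\log S$.

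First I would treat the denominator. Its argument is $S\bigl(\rho_S/\pi+(1-\rho_S)/(1-\pi)\bigr)$, and the bracket converges to $\rho/\pi+(1-\rho)/(1-\pi)$. This limit is at least $\min(1/\pi,1/(1-\pi))\ge 1>0$ for every $\rho\in[0,1]$. Hence the denominator is asymptotic to $\sqrt{S\log S}\sqrt{\rho/\pi+(1-\rho)/(1-\pi)}$.

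Next I would treat the numerator term by term. If $\rho\in(0,1)$, both terms are handled by the sharp asymptotic and give $\sqrt{S\log S}(\sqrt\rho+\sqrt{1-\rho})$. The main obstacle is the boundary case $\rho\in\{0,1\}$, where one arm's clock $V_w$ may be $o(S)$ or may even stay bounded. There I would use the crude bound: $b(V_w;\alpha/2)^2\le(V_w+\eta^{-2})(\log(S\eta^2+1)+2\log(2/\alpha))=o(S\log S)$ because $V_w/S\to0$. So that term contributes $0=\sqrt{\rho}$ or $\sqrt{1-\rho}$ after dividing by $\sqrt{S\log S}$. Taking the ratio then gives the stated limit.

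For the inequality $\le1$, I would apply Cauchy--Schwarz:
\[
\sqrt\rho+\sqrt{1-\rho}=\sqrt{\pi}\sqrt{\rho/\pi}+\sqrt{1-\pi}\sqrt{(1-\rho)/(1-\pi)}\le\sqrt{\pi+1-\pi}\,\sqrt{\rho/\pi+(1-\rho)/(1-\pi)}.
\]
The three special cases then follow by substitution. For $\rho=1/2$ (the case $V_0=V_1$) the limit is $\sqrt2/\sqrt{1/(2\pi(1-\pi))}=2\sqrt{\pi(1-\pi)}$. For $\rho=0$ the limit is $1/\sqrt{1/(1-\pi)}=\sqrt{1-\pi}$. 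For $\rho=1$ the limit is $\sqrt\pi$.
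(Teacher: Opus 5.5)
Your proposal is correct and follows essentially the same route as the paper's proof. You normalize by $\sqrt{S\log S}$ and use $b(V;\delta)^2\sim V\log V$ for the denominator and for any arm whose share of $S$ has a positive limit. A vanishing share is handled by the crude bound $\log(V_w\eta^2+1)\le\log(S\eta^2+1)$, and you finish with Cauchy--Schwarz and substitution. The only loose spot is your ``uniformly in $c$'' intermediate claim, whose one-line justification gives only an upper bound on $b$, but your case analysis never relies on it.
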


An additional benefit of the union bound is that one can report confidence sequences for each arm separately, in addition to the difference, without the need for further multiple testing correction, as it is already present.
In Figure~\ref{fig:sample_paths} in Appendix~\ref{app:supplemental_figures}, we illustrate the tightness of the boundary $b(V_t(0);\alpha/2)+ b(V_t(1);\alpha/2)$ with respect to a hypothetical boundary $b(2(V_t(0)+V_t(1));\alpha)$ (which equals $b(\sigma_t^2;\alpha)$ under balanced assignment) for comparison.

\section{Simulation Study}
\label{section:simulation}

\begin{figure*}[t]
    \vskip 0.2in
    \begin{center}
    \centerline{\includegraphics[width=\textwidth]{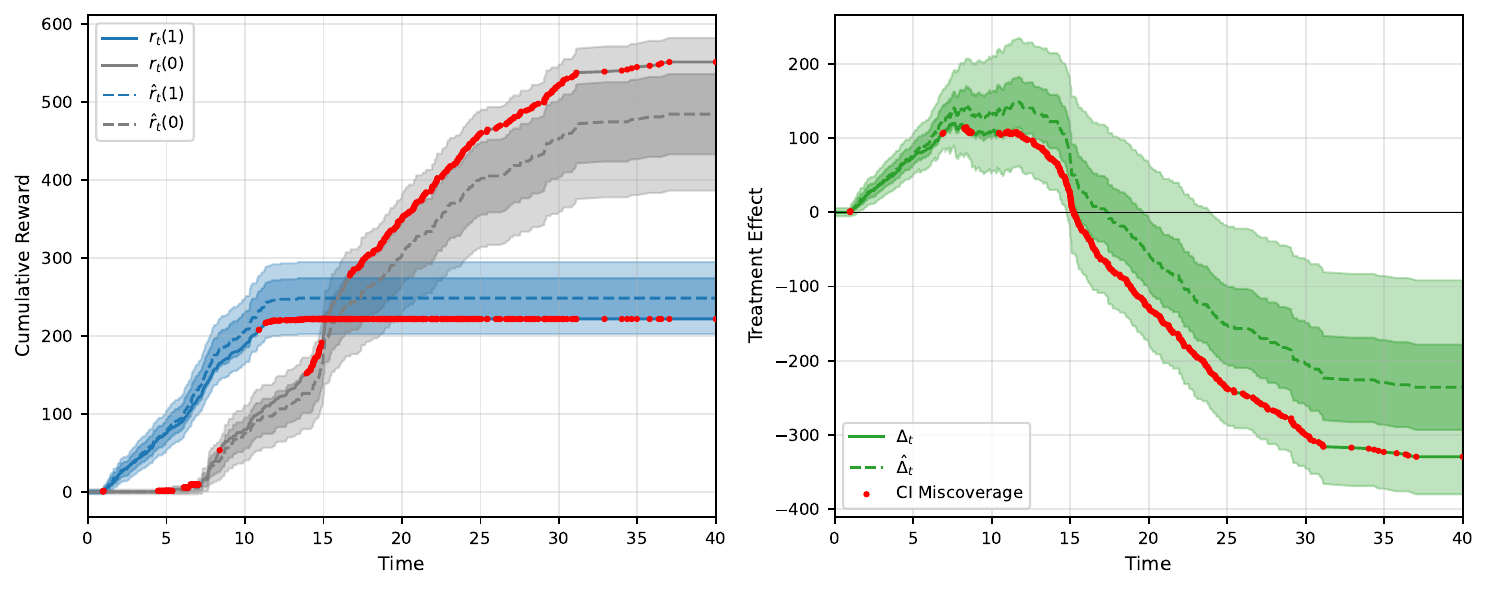}}
    \caption{A single realization from Dataset~\ref{dataset:simulation} comparing classical pointwise confidence intervals (dark shaded) from equation \eqref{eq:classical_confidence_intervals} with the anytime-valid confidence processes from
    Theorems~\ref{thm:asymptotic_arm_confidence_sequence} and~\ref{thm:asymptotic_difference_confidence_sequence} (light shaded).
    Left panel: cumulative rewards $r_t(w)$ and estimates $\hat{r}_t(w)$ for control (gray) and treatment (blue), showing the ``pull-forward'' effect where treatment outcomes arrive earlier but control accumulates higher reward over time.
    Right panel: treatment effect $\Delta_t$ and estimate $\hat{\Delta}_t$.
    Red dots indicate times where the pointwise interval fails to cover the truth.}
    \label{fig:miscoverage_example}
    \end{center}
    \vskip -0.2in
\end{figure*}

We evaluate the empirical coverage of our confidence processes using synthetic data that mimics the structure of our motivating application.
The data-generating process is designed to capture a scenario where treatment accelerates outcome realization but reduces outcome magnitude.

\begin{dataset}[Synthetic Nonstationary Delayed Outcomes]
\label{dataset:simulation}
We generate $N = 500$ units with staggered entry times $E_i \stackrel{iid}{\sim} \mathrm{Uniform}(0, 10)$ and balanced assignment $\pi_i(1)=0.5$.
The data-generating process is designed to have four features:
\begin{enumerate}
    \item \textbf{Arm-specific never-event fractions.}
    With probabilities $(\theta_0,\theta_1)=(0.20,0.35)$, a unit has $t_i(w)=\infty$ under arm $w$, so the corresponding reward does not contribute to the cumulative process.

    \item \textbf{Treatment pull-forward.}
    Conditional on not being a never-event unit, $t_i(w)$ is generated by thinning from
    \[
        \lambda(t\mid E,w)
        =
        \lambda_{\mathrm{base}}(t-E,w)
        \lambda_{\mathrm{cycle}}(t)
        \lambda_{\mathrm{shock}}(t,w).
    \]
    The baseline hazard is log-normal with $(\mu_0,\sigma_0)=(2.5,0.5)$ for control and $(\mu_1,\sigma_1)=(0.3,0.3)$ for treatment, so treatment events typically arrive earlier.

    \item \textbf{Calendar-time nonstationarity.}
    The hazard includes the cyclical term
    \[
        \lambda_{\mathrm{cycle}}(t)=1+0.2\sin(2\pi t/7)
    \]
    and localized shocks
    \[
        \lambda_{\mathrm{shock}}(t,w)
        =
        1+\sum_j a_{jw}\exp\{-(t-c_j)^2/(2\sigma_j^2)\}.
    \]
    The shocks are centered at $c_1=8$ and $c_2=15$, with arm-specific intensities $(a_{1,0},a_{1,1})=(1.5,2.0)$ and $(a_{2,0},a_{2,1})=(1.0,0.8)$.

    \item \textbf{Outcome-time dependence.}
    For finite event times, rewards are
    \[
        \begin{aligned}
        y_i(w)
        &=
        \beta_w
        \{1+0.15\log(1+t_i(w)-E_i)\} \\
        &\quad \times
        \{1+0.1\sin(2\pi t_i(w)/7)\}
        \epsilon_i,
        \end{aligned}
    \]
    where $(\beta_0,\beta_1)=(1.0,0.6)$ and $\epsilon_i\sim\mathrm{Uniform}(0.9,1.1)$.
\end{enumerate}
Together, these features create a setting where treatment appears better early because outcomes arrive sooner, while control eventually wins in cumulative reward because its outcomes have larger magnitude.
\ifdefined\arxivversion
Figure~\ref{fig:nonstationary_reward} illustrates the dependence of rewards on treatment, internal time, and external time.
\else
See Figure~\ref{fig:nonstationary_reward} in Appendix~\ref{app:supplemental_figures} for an illustration.
\fi
\end{dataset}

\ifdefined\arxivversion
\begin{figure}[t]
    \vskip 0.2in
    \begin{center}
    \centerline{\includegraphics[width=0.8\columnwidth]{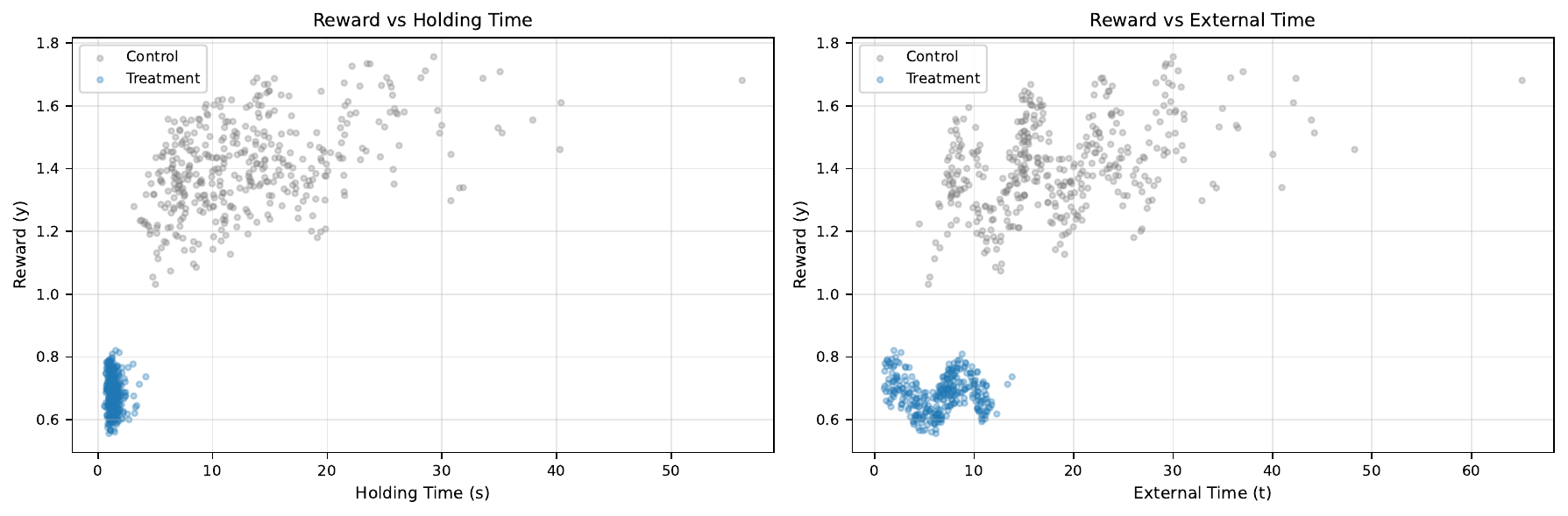}}
    \caption{Illustration of outcome dependence on treatment, internal time, and external time for Dataset~\ref{dataset:simulation}. The potential outcome $y_i(w)$ depends on treatment $w$ (which determines the baseline $\beta_w$), internal time $s_i = t_i(w) - E_i$ (time since entry, contributing a logarithmic growth), and external time $t_i(w)$ (contributing cyclical variation).}
    \label{fig:nonstationary_reward}
    \end{center}
    \vskip -0.2in
\end{figure}
\fi

We restrict the main simulation study to IPW, the directly observable estimator covered by Theorem~\ref{thm:asymptotic_arm_confidence_sequence}.
Entry-frozen AIPW augmentations can be attractive for variance reduction, but their anytime-valid analysis requires controlling the error introduced by activating the augmentation at entry rather than at the unit's potential event time.
This extra term is not covered by the martingale argument in Section~\ref{section:filtrations}.
We leave such extensions to future work.

Figure~\ref{fig:miscoverage_example} provides a comparison of classical pointwise confidence intervals:
\begin{equation}
    \label{eq:classical_confidence_intervals}
    \hat{r}_t(w) \pm \sqrt{\chi^2_{1,1-\alpha}\hat{V}_t(w)}\,\quad\text{and}\quad\hat{\Delta}_t \pm \sqrt{\chi^2_{1,1-\alpha}\hat{\sigma}_t^2}
\end{equation}
with our IPW confidence sequences.
The red dots show clear coverage violations for the classical pointwise confidence intervals, whereas the IPW confidence sequences cover the estimand at all times.

We examine empirical time-uniform coverage of the IPW confidence sequences by repeatedly drawing treatment assignments while holding the potential outcomes fixed.
For each replication, we evaluate whether $r_t(w)$ and $\Delta_t$ are contained in their confidence sequences for all monitoring times $t$.
Across 200 treatment-assignment resamples, the empirical time-uniform coverage of the IPW confidence sequences was $96.0\%$ for $r_t(0)$, $97.0\%$ for $r_t(1)$, and $99.0\%$ for $\Delta_t$, compared with nominal anytime coverage levels of $97.5\%$ for each arm and $95\%$ for the treatment effect.

\ifdefined\arxivversion
\else
Appendix~\ref{section:case_study} applies the IPW construction to telemetry data from a software A/B ``canary'' test monitoring out-of-memory (OOM) errors.
OOM errors are a natural delayed-outcome setting: memory leaks may take minutes or hours to produce a crash, depending on device state and usage patterns.
With $y_i(w)=1$, the cumulative reward estimand becomes a counterfactual counting process, so the confidence sequence tracks the difference in cumulative OOM counts between treatment and control while preserving time-uniform error control.
Equivalently, this is a design-based analogue of monitoring counterfactual cumulative incidence curves, with randomization rather than a survival model providing the basis for inference.
\fi

\ifdefined\arxivversion
\section{Case Study: Out-of-Memory Errors}
\label{section:case_study}

In software delivery, A/B ``Canary'' Tests are often used as quality control gates to compare the release candidate (treatment) against the existing software version (control) in a production environment \citep{rapidregression}.
Devices are randomized into a treatment arm at app-startup (staggered entry) and emit logging events whenever an error occurs.

Out-of-memory (OOM) errors are a particularly important class of software defects where delayed outcomes arise naturally.
A memory leak occurs when a program allocates memory but fails to release it when no longer needed; over time, memory usage grows until the system cannot satisfy an allocation request, triggering an OOM error.
The delay between when a device starts running the buggy software and when the OOM error manifests depends on usage patterns and available memory - some devices may crash within minutes, others after hours.
This makes OOM errors an ideal application for our methodology: the outcome (an OOM crash) is inherently delayed.

Anytime-valid inference is imperative for this setting.
Due to the large-scale nature of such technology experiments, it is crucial to abort the canary test and roll back the release if the treatment arm is significantly worse than the control arm.
A number of anytime-valid rank-based statistics could be considered for this setting \citep{ter2020anytime, lindon2022anytime, gu1991weak};
however, these all assume a proportional hazards model for the hazard rate of errors, which is violated when resource depletion drives event timing.

We apply our methodology here in the special case $y_i(w) = 1$ for all $i$ and $w$, yielding a simple count-based approach.
This addresses the counterfactual question ``How many unique devices would have encountered an OOM error by the current time $t$, had they all been assigned to the treatment arm?''.
This corresponds to a counterfactual reality in which the release engineer had simply deployed the release candidate to all global production devices.
The same question can be asked for the control arm, and compared with the treatment.

We apply this methodology to telemetry data from a real production canary test, where devices were randomized 50/50 between the release candidate and the existing software version.
Using the IPW estimator, Figure~\ref{fig:design_based_cs} shows the design-based confidence sequences constructed using Theorems~\ref{thm:asymptotic_arm_confidence_sequence} and~\ref{thm:asymptotic_difference_confidence_sequence} and the sequential $p$-value from Corollary~\ref{cor:sequential_pvalue}.
Almost immediately, we can detect an increase in OOM errors among devices running the newer treatment software.

\begin{figure}[ht]
    \vskip 0.2in
    \begin{center}
    \centerline{\includegraphics[width=\columnwidth]{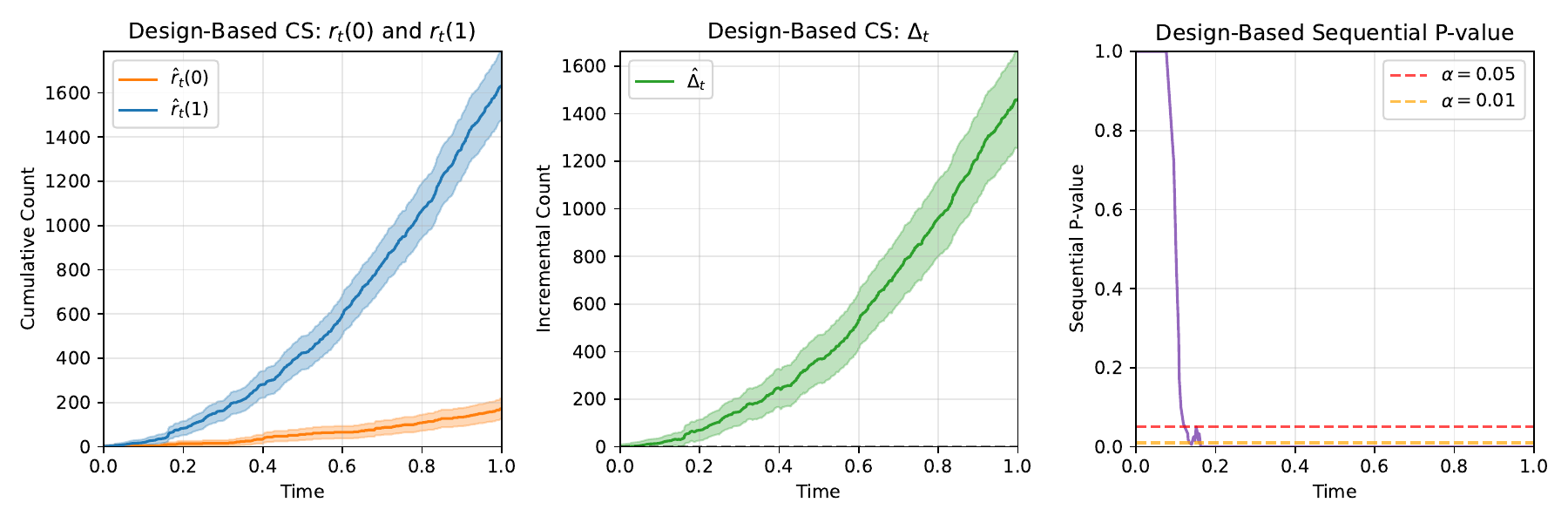}}
    \caption{Design-based confidence sequences for OOM error counts. Left: single-arm confidence sequences for $r_t(0)$ and $r_t(1)$. Center: treatment effect confidence sequence for $\Delta_t = r_t(1) - r_t(0)$. Right: sequential p-value for testing $H_0: \Delta_t = 0$.}
    \label{fig:design_based_cs}
    \end{center}
    \vskip -0.2in
\end{figure}
\fi

\section{Conclusion}
\label{sec:conclusion}

Delayed-outcome experiments are difficult because treatment can affect more than a single scalar outcome.
It can change the probability that an event ever occurs, the delay until that event occurs, and the value realized when it occurs.
Rather than choosing one of these dimensions as the outcome, we collapse them into the sample cumulative reward process $r_t(w)$: the total reward that would have been observed by calendar time $t$ had all entered units been assigned to arm $w$.
This gives a time-indexed causal estimand that directly represents what experimenters monitor in practice.

Staggered entry introduces an additional layer of complexity.
Units entering the experiment at different times may differ systematically, so units separated by entry time are often heterogeneous.
At the same time, external factors such as product changes, seasonality, or economic conditions may affect outcome timing and value, introducing nonstationarity in the outcome distribution.
Together, these features make a stable superpopulation model for delayed outcomes hard to justify.
For this reason, we adopt a design-based framework and target a sample causal estimand: we condition on the entry times, covariates, and potential outcomes, leaving only the treatment assignments random.

Our main technical contribution is to show that, in this design-based setting, the treatment-effect estimation error is not a martingale with respect to any filtration.
This obstruction is the sequential analogue of the classical fixed-sample variance problem: the exact design-based variance contains an unobservable cross-potential-outcome term, so classical practice replaces it with a variance upper bound.
Sequentially, the same cross-arm dependence prevents a direct martingale confidence sequence for $\Delta_t$.
We therefore construct confidence sequences arm by arm, where valid martingale inference is possible, and combine them using a union bound.

For a single arm, the appropriate filtration reveals unit $i$'s treatment indicator at its arm-$w$ potential event time $t_i(w)$.
Under this filtration, the IPW estimation error is a martingale, yielding design-based confidence sequences for arm-level cumulative rewards.
The same analysis also characterizes the oracle event-time AIPW estimators that preserve the martingale property: nonzero augmentations must activate at the potential event time and remain fixed thereafter.
The IPW estimator is the directly observable special case covered by the main theory.

The union-bound construction is not merely a convenience; it plays the same role sequentially that the variance upper bound plays at fixed sample sizes.
It treats the arms separately, where valid inference is available, and then combines the two arm-level confidence sequences into a confidence sequence for $\Delta_t$.
Perhaps surprisingly, this need not be a severe price: asymptotically, the union-bound boundary can be no wider, and often narrower, than the usual variance-upper-bound benchmark when the arm-level variance clocks are imbalanced.

The resulting method provides anytime-valid IPW inference for cumulative rewards under staggered entry and delayed outcomes without distributional assumptions on event times or rewards.
Simulations and the telemetry case study illustrate the practical value of this perspective: the method tracks the evolving counterfactual reward processes while maintaining time-uniform error control.
The main open direction is variance reduction.
Although AIPW estimators are fixed-time unbiased under randomization, the martingale-preserving augmentations identified by our theory require oracle event-time information: they must activate at $t_i(w)$ for every unit, but randomized assignment reveals $t_i(w)$ only for units assigned to arm $w$.
Thus the only directly observable estimator covered by the present anytime-valid theory is IPW.
Developing confidence sequences that leverage predictions of both $y_i(w)$ and $\bone[t_i(w)\le t]$ to reduce uncertainty is the focus of future work.

\ifdefined\arxivversion
\else
\section*{Impact Statement}

This paper presents work whose goal is to advance the field of Machine Learning. There are many potential societal consequences of our work, none of which we feel must be specifically highlighted here.
\fi

\bibliography{references}
\ifdefined\arxivversion
\bibliographystyle{plainnat}
\else
\bibliographystyle{icml2026}
\fi

\newpage
\appendix
\ifdefined\arxivversion
\else
\onecolumn
\fi
\section{Proofs}

\subsection{Proof of Theorem \ref{thm:single_arm_martingale}}

\begin{proof}
    We prove both directions for the unit-level process.

    \paragraph{If direction.}
    Assume $m_{it}(w) = 0$ for all $t < t_i(w)$ and $m_{it}(w) = m_{i,t_i(w)}(w)$ for all $t \ge t_i(w)$.

    We first verify adaptedness. For $t < t_i(w)$, the residual is $e_{it}(w) = r_{it}(w) - m_{it}(w) = 0 - 0 = 0$, so $M_{it}(w)=0$ regardless of the unobserved $w_i$. For $t \geq t_i(w)$, the treatment assignment $w_i$ is $\mathcal{F}_t(w)$-measurable, and $m_{i,t_i(w)}(w)$ is $\mathcal{F}_{t_i(w)-}(w)$-measurable by assumption, hence $\mathcal{F}_t(w)$-measurable. Therefore $e_{it}(w)=y_i(w)-m_{i,t_i(w)}(w)$ and $M_{it}(w)=Z_i(w)e_{it}(w)$ are $\mathcal{F}_t(w)$-measurable.

    Next, we establish the zero conditional mean increment property. Since $m_{it}(w)$ is constant on $[t_i(w), \infty)$, the process $(M_{it}(w))_t$ changes value only at $t_i(w)$. The increment at $t_i(w)$ is:
    \begin{equation}
        \Delta M_{i,t_i(w)}(w) = Z_i(w) \cdot (y_i(w) - m_{i,t_i(w)}(w)).
    \end{equation}
    According to the filtration, the treatment assignment $w_i$ is not revealed until time $t_i(w)$. By the predictability and correct specification of the propensity,
    \begin{equation}
        \mathbb{E}[Z_i(w) \mid \mathcal{F}_{t_i(w)-}(w)]
        = \frac{\mathbb{P}(w_i=w \mid \mathcal{F}_{t_i(w)-}(w))}{\pi_i(w)} - 1
        = 0.
    \end{equation}
    Since $y_i(w) - m_{i,t_i(w)}(w)$ is $\mathcal{F}_{t_i(w)-}(w)$-measurable, it factors out:
    \begin{equation}
        \mathbb{E}[\Delta M_{i,t_i(w)}(w) \mid \mathcal{F}_{t_i(w)-}(w)] = (y_i(w) - m_{i,t_i(w)}(w)) \cdot \mathbb{E}[Z_i(w) \mid \mathcal{F}_{t_i(w)-}(w)] = 0.
    \end{equation}
    As the expected increment is zero at the only jump time and the process is constant otherwise, $(M_{it}(w))_{t \ge 0}$ has zero conditional mean increments.

    \paragraph{Only if direction.}
    We show that violating either condition prevents adaptedness or the zero conditional mean increment property.

    \textit{Case 1: $m_{it}(w) \neq 0$ for some $t < t_i(w)$.}
    At such a time, $e_{it}(w) = -m_{it}(w) \neq 0$ (since $r_{it}(w) = 0$).
    Since $t < t_i(w)$, the treatment assignment $w_i$ is not $\mathcal{F}_t(w)$-measurable, so $M_{it}(w)=Z_i(w)e_{it}(w)$ is not $\mathcal{F}_t(w)$-measurable. The process fails to be adapted.

    \textit{Case 2: $m_{it}(w)$ is not constant on $[t_i(w), \infty)$.}
    Suppose $m_{it}(w) = 0$ for $t < t_i(w)$ and the post-event values $m_{it}(w)$ are $\mathcal{F}_t(w)$-measurable, so the process is adapted, but there exist $t_i(w) \le s < t$ such that $m_{is}(w)\ne m_{it}(w)$.
    Over this interval, $r_{it}(w)=r_{is}(w)=y_i(w)$, so
    \begin{equation}
        M_{it}(w)-M_{is}(w)=Z_i(w)\{m_{is}(w)-m_{it}(w)\}.
    \end{equation}
    Since $w_i$ is already $\mathcal{F}_{s}(w)$-measurable, this is a predictable nonzero change, so its conditional expectation given $\mathcal{F}_{s}(w)$ equals the increment itself rather than zero.
    This violates the zero conditional mean increment property.
    \end{proof}

\subsection{Proof of Theorem \ref{thm:asymptotic_arm_confidence_sequence}}
\begin{proof}
The proof follows standard arguments for asymptotic confidence sequences under non-identical variances \citep{waudbysmith, ham}, which we include for the sake of completeness.
The boundedness assumption ensures the Lindeberg condition is satisfied.

\paragraph*{Step 1: Strong approximation.}
Let $\nu_i^2 := \frac{1-\pi_i(w)}{\pi_i(w)} e_{i,t_i(w)}(w)^2$ denote the variance contribution from unit $i$, so the cumulative variance is $V_t(w) = \sum_{i : t_i(w) \leq t} \nu_i^2$. By Lemma~\ref{lemma:strong_approx}, on a potentially enriched probability space there exist i.i.d.\ standard Gaussians $(G_i)_{i \geq 1}$ such that
\begin{equation}
    \hat{r}_t(w) - r_t(w) = \sum_{i : t_i(w) \leq t} \nu_i G_i + o_{a.s.}\left((V_t(w) \log\log V_t(w))^{1/2}\right).
\end{equation}

\paragraph*{Step 2: Gaussian mixture martingale for non-identical variances.}
For any $\lambda \in \mathbb{R}$, the process
\begin{equation}
    \widetilde{M}_t(\lambda) := \exp\left\{\sum_{i : t_i(w) \leq t} \left(\lambda \nu_i G_i - \frac{\lambda^2 \nu_i^2}{2}\right)\right\}
\end{equation}
is a nonnegative martingale starting at one. Mixing over $\lambda$ with a mean-zero Gaussian distribution with variance $\eta^2 > 0$, we obtain the closed-form mixture martingale
\begin{equation}
    \widetilde{M}_t := \int_{\lambda \in \mathbb{R}} \widetilde{M}_t(\lambda) \frac{1}{\sqrt{2\pi\eta^2}} \exp\left\{-\frac{\lambda^2}{2\eta^2}\right\} d\lambda = \frac{\exp\left\{\frac{\eta^2 \left(\sum_{i : t_i(w) \leq t} \nu_i G_i\right)^2}{2(V_t(w)\eta^2 + 1)}\right\}}{\sqrt{V_t(w)\eta^2 + 1}},
\end{equation}
which is also a nonnegative martingale with $\widetilde{M}_0 = 1$.

\paragraph*{Step 3: Applying Ville's inequality and variance estimation.}
By Ville's inequality \citep{ville1939etude}, $\mathbb{P}(\exists t : \widetilde{M}_t \geq 1/\alpha) \leq \alpha$. Inverting this bound, with probability at least $(1-\alpha)$,
\begin{equation}
    \forall t > 0, \quad \left|\sum_{i : t_i(w) \leq t} \nu_i G_i\right| < \sqrt{\frac{V_t(w)\eta^2 + 1}{\eta^2} \log\left(\frac{V_t(w)\eta^2 + 1}{\alpha^2}\right)} = b(V_t(w); \alpha).
\end{equation}
Combining with Step 1, we have with probability at least $(1-\alpha)$,
\begin{equation}
    \forall t > 0, \quad |\hat{r}_t(w) - r_t(w)| < b(V_t(w); \alpha) + o_{a.s.}\left((V_t(w) \log\log V_t(w))^{1/2}\right).
\end{equation}

It remains to show that $\hat{V}_{t}(w)$ can replace $V_{t}(w)$. 
For the unaugmented IPW estimator, the potential outcomes are fixed constants, so taking the unconditional expectation over treatment assignments yields $\mathbb{E}[\hat{V}_{t}(w)]=V_{t}(w)$. 
However, when using an event-time augmentation like the running mean, the residuals $e_{it}(w)$ depend on past assignments, making the true predictable quadratic variation $V_{t}(w)$ a random process. 
Because the augmentation $m_{it}(w)$ is strictly $\mathcal{F}_{t_{i}(w)-}(w)$-measurable, the residual $e_{it}(w)$ is deterministic conditional on the strict past. 
Therefore, the expected increment of the variance estimator conditional on the past exactly matches the increment of the true variance process: $\mathbb{E}[\hat{V}_{t}(w) - \hat{V}_{t-}(w) \mid \mathcal{F}_{t-}(w)]=V_{t}(w) - V_{t-}(w)$. 
Because both the past estimator $\hat{V}_{t-}(w)$ and the true variance target $V_{t}(w)$ are $\mathcal{F}_{t-}(w)$-measurable, rearranging this yields $\mathbb{E}[\hat{V}_{t}(w) - V_{t}(w) \mid \mathcal{F}_{t-}(w)] = \hat{V}_{t-}(w) - V_{t-}(w)$, fulfilling the strict definition of a martingale for the difference $\hat{V}_{t}(w)-V_{t}(w)$. 
Since outcomes are bounded ($|y_{i}(w)|\le B$) and propensities are bounded away from zero, the increments of this martingale are uniformly bounded. 
By the Strong Law of Large Numbers for Martingales, $\frac{\hat{V}_{t}(w)-V_{t}(w)}{V_{t}(w)}\xrightarrow{a.s.}0$ as $V_{t}(w)\rightarrow\infty$, which implies $\hat{V}_{t}(w)/V_{t}(w)\xrightarrow{a.s.}1$.

\paragraph*{Step 4: Showing that $\hat{V}_t(w)$ can replace $V_t(w)$.}
We now show that $b(V_t(w); \alpha) = b(\hat{V}_t(w); \alpha)(1 + o(1))$. Writing $\hat{V}_t(w) - V_t(w) = o(V_t(w))$, we have $V_t(w) = \hat{V}_t(w) + o(\hat{V}_t(w))$. Substituting into the boundary:
\begin{align*}
    b(V_t(w); \alpha)^2 &= \frac{V_t(w)\eta^2 + 1}{\eta^2} \log\left(\frac{V_t(w)\eta^2 + 1}{\alpha^2}\right) \\
    &= \frac{(\hat{V}_t(w) + o(\hat{V}_t(w)))\eta^2 + 1}{\eta^2} \log\left(\frac{(\hat{V}_t(w) + o(\hat{V}_t(w)))\eta^2 + 1}{\alpha^2}\right) \\
    &= \frac{\hat{V}_t(w)\eta^2 + 1 + o(\hat{V}_t(w))}{\eta^2} \log\left(\frac{\hat{V}_t(w)\eta^2 + 1}{\alpha^2}(1 + o(1))\right) \\
    &= \left(\frac{\hat{V}_t(w)\eta^2 + 1}{\eta^2} + o(\hat{V}_t(w))\right) \left(\log\left(\frac{\hat{V}_t(w)\eta^2 + 1}{\alpha^2}\right) + \log(1 + o(1))\right).
\end{align*}
Since $\log(1 + x) = x + o(1)$ for small $x$, we have $\log(1 + o(1)) = o(1)$. Expanding and noting that the cross terms are $o(\hat{V}_t(w) \log \hat{V}_t(w))$:
\begin{equation*}
    b(V_t(w); \alpha)^2 = b(\hat{V}_t(w); \alpha)^2 + o(\hat{V}_t(w) \log \hat{V}_t(w)).
\end{equation*}
This gives $b(V_t(w); \alpha)^2 / b(\hat{V}_t(w); \alpha)^2 = 1 + o(1)$, and since $(1 + o(1))^{1/2} = 1 + o(1)$:
\begin{equation*}
    b(V_t(w); \alpha) = b(\hat{V}_t(w); \alpha)(1 + o(1)).
\end{equation*}
Combined with the strong approximation error being $o_{a.s.}((V_t(w)\log\log V_t(w))^{1/2}) = o(b(V_t(w);\alpha))$, this establishes that $\hat{r}_t(w) \pm b(\hat{V}_t(w); \alpha)$ is a $(1-\alpha)$ asymptotic confidence process for $r_t(w)$.
\end{proof}

\subsection{Proof of Lemma \ref{lemma:martingale_covariance}}

\begin{proof}
    Since $(X_t)_t$ is a martingale with $X_0 = 0$, it follows that $\mathbb{E}[X_t] = \mathbb{E}[X_0] = 0$ for all $t \geq 0$. Consequently, the covariance reduces to the expectation of the product:
    \begin{equation}
        \mathrm{Cov}(X_s, X_t) = \mathbb{E}[X_s X_t] - \mathbb{E}[X_s]\mathbb{E}[X_t] = \mathbb{E}[X_s X_t].
    \end{equation}
    By the Law of Iterated Expectations, we condition on $\mathcal{F}_s$:
    \begin{equation}
        \mathbb{E}[X_s X_t] = \mathbb{E}\left[ \mathbb{E}[X_s X_t \mid \mathcal{F}_s] \right].
    \end{equation}
    Since $X_s$ is $\mathcal{F}_s$-measurable, it factors out of the conditional expectation:
    \begin{equation}
        \mathbb{E}\left[ \mathbb{E}[X_s X_t \mid \mathcal{F}_s] \right] = \mathbb{E}\left[ X_s \mathbb{E}[X_t \mid \mathcal{F}_s] \right].
    \end{equation}
    Applying the martingale property $\mathbb{E}[X_t \mid \mathcal{F}_s] = X_s$ for $s \le t$, we obtain:
    \begin{equation}
        \mathbb{E}\left[ X_s \mathbb{E}[X_t \mid \mathcal{F}_s] \right] = \mathbb{E}\left[ X_s^2 \right].
    \end{equation}
    Finally, noting that $\mathrm{Var}(X_s) = \mathbb{E}[X_s^2] - (\mathbb{E}[X_s])^2 = \mathbb{E}[X_s^2]$, we conclude that $\mathrm{Cov}(X_s, X_t) = \mathrm{Var}(X_s)$.
\end{proof}

\subsection{Proof of Covariance Formula \eqref{eq:covariance}}

\begin{proof}
    Since units are randomized independently, $\mathrm{Cov}(\hat{\Delta}_s, \hat{\Delta}_t) = \sum_{i \in \mathcal{E}_t} \mathrm{Cov}(\hat{\Delta}_{is}, \hat{\Delta}_{it})$.
    For a single unit $i$, we have
    \begin{equation}
        \mathrm{Cov}(\hat{\Delta}_{is}, \hat{\Delta}_{it}) = \mathbb{E}[\hat{\Delta}_{is}\hat{\Delta}_{it}] - \mathbb{E}[\hat{\Delta}_{is}]\mathbb{E}[\hat{\Delta}_{it}] = \mathbb{E}[\hat{\Delta}_{is}\hat{\Delta}_{it}] - \Delta_{is}\Delta_{it}.
    \end{equation}

    Write the AIPW estimator as $\hat{\Delta}_{it} = (m_{it}(1) - m_{it}(0)) + A_{it}$, where
    \begin{equation}
        A_{it} = \frac{\bone[w_i=1]}{\pi_i(1)}e_{it}(1) - \frac{\bone[w_i=0]}{\pi_i(0)}e_{it}(0).
    \end{equation}
    Note that $\mathbb{E}[A_{it}] = e_{it}(1) - e_{it}(0)$.

    Expanding the product and taking expectations:
    \begin{align}
        \mathbb{E}[\hat{\Delta}_{is}\hat{\Delta}_{it}] &= (m_{is}(1) - m_{is}(0))(m_{it}(1) - m_{it}(0)) \notag\\
        &\quad + (m_{is}(1) - m_{is}(0))(e_{it}(1) - e_{it}(0)) \notag\\
        &\quad + (m_{it}(1) - m_{it}(0))(e_{is}(1) - e_{is}(0)) + \mathbb{E}[A_{is} A_{it}].
    \end{align}

    Similarly, since $\Delta_{it} = (m_{it}(1) - m_{it}(0)) + (e_{it}(1) - e_{it}(0))$:
    \begin{align}
        \Delta_{is}\Delta_{it} &= (m_{is}(1) - m_{is}(0))(m_{it}(1) - m_{it}(0)) \notag\\
        &\quad + (m_{is}(1) - m_{is}(0))(e_{it}(1) - e_{it}(0)) \notag\\
        &\quad + (m_{it}(1) - m_{it}(0))(e_{is}(1) - e_{is}(0)) \notag\\
        &\quad + (e_{is}(1) - e_{is}(0))(e_{it}(1) - e_{it}(0)).
    \end{align}

    Therefore:
    \begin{equation}
        \mathbb{E}[\hat{\Delta}_{is}\hat{\Delta}_{it}] - \Delta_{is}\Delta_{it} = \mathbb{E}[A_{is} A_{it}] - (e_{is}(1) - e_{is}(0))(e_{it}(1) - e_{it}(0)).
    \end{equation}

    To compute $\mathbb{E}[A_{is} A_{it}]$, note that when $w_i = 1$:
    \begin{equation}
        A_{is} A_{it} = \frac{e_{is}(1)e_{it}(1)}{\pi_i(1)^2},
    \end{equation}
    and when $w_i = 0$:
    \begin{equation}
        A_{is} A_{it} = \frac{e_{is}(0)e_{it}(0)}{\pi_i(0)^2}.
    \end{equation}

    Thus:
    \begin{equation}
        \mathbb{E}[A_{is} A_{it}] = \pi_i(1) \cdot \frac{e_{is}(1)e_{it}(1)}{\pi_i(1)^2} + \pi_i(0) \cdot \frac{e_{is}(0)e_{it}(0)}{\pi_i(0)^2} = \frac{e_{is}(1)e_{it}(1)}{\pi_i(1)} + \frac{e_{is}(0)e_{it}(0)}{\pi_i(0)}.
    \end{equation}

    Combining and summing over all units yields \eqref{eq:covariance}.
\end{proof}

\subsection{Proof of Proposition \ref{prop:relative_width_generalized}}
\begin{proof}
Let $S=V_0+V_1$ and $\rho_S=V_0/S$, so that $\rho_S\to\rho$ and $V_1/S\to1-\rho$.
We first record the scaling of the denominator.
Since
\begin{equation}
    \frac{V_1}{1-\pi}+\frac{V_0}{\pi}
    =
    S\left\{\frac{1-\rho_S}{1-\pi}+\frac{\rho_S}{\pi}\right\},
\end{equation}
and the term in braces converges to
\begin{equation}
    c_\rho \coloneqq \frac{1-\rho}{1-\pi}+\frac{\rho}{\pi} > 0,
\end{equation}
the normal-mixture boundary satisfies
\begin{equation}
    \frac{b\left(\frac{V_1}{1-\pi}+\frac{V_0}{\pi};\alpha\right)}
    {\sqrt{S\log S}}
    \to
    \sqrt{c_\rho}.
\end{equation}
Indeed, this follows directly from
$b(V;\delta)=\sqrt{V\log(V\eta^2/\delta^2)}(1+o(1))$ whenever $V\to\infty$.

For the numerator, the same expansion gives
\begin{align}
    \frac{b(V_0;\alpha/2)}{\sqrt{S\log S}} &\to \sqrt{\rho}, \notag\\
    \frac{b(V_1;\alpha/2)}{\sqrt{S\log S}} &\to \sqrt{1-\rho}.
\end{align}
The endpoint cases are included in these limits: if, for example, $V_0=o(S)$, then $V_0\le S$ and $b(V_0;\alpha/2)^2=O((V_0+1)\log S)=o(S\log S)$.
Therefore
\begin{equation}
    \frac{b(V_0;\alpha/2)+b(V_1;\alpha/2)}{\sqrt{S\log S}}
    \to
    \sqrt{\rho}+\sqrt{1-\rho}.
\end{equation}
Dividing the numerator limit by the denominator limit gives
\begin{equation}
    R_{\pi}(V_0,V_1)
    \to
    \frac{\sqrt{\rho}+\sqrt{1-\rho}}
    {\sqrt{\rho/\pi+(1-\rho)/(1-\pi)}}.
\end{equation}
Finally, the limiting ratio is at most one by Cauchy's inequality:
\begin{equation}
    \left(\sqrt{\rho}+\sqrt{1-\rho}\right)^2
    \leq
    \left(\frac{\rho}{\pi}+\frac{1-\rho}{1-\pi}\right)(\pi+1-\pi).
\end{equation}
The symmetric and asymmetric limits follow by substituting $\rho=1/2$, $\rho=0$, and $\rho=1$.
\end{proof}

\section{Supplemental Figures}
\label{app:supplemental_figures}

\begin{figure*}[t]
    \vskip 0.2in
    \begin{center}
    \centerline{\includegraphics[width=\textwidth]{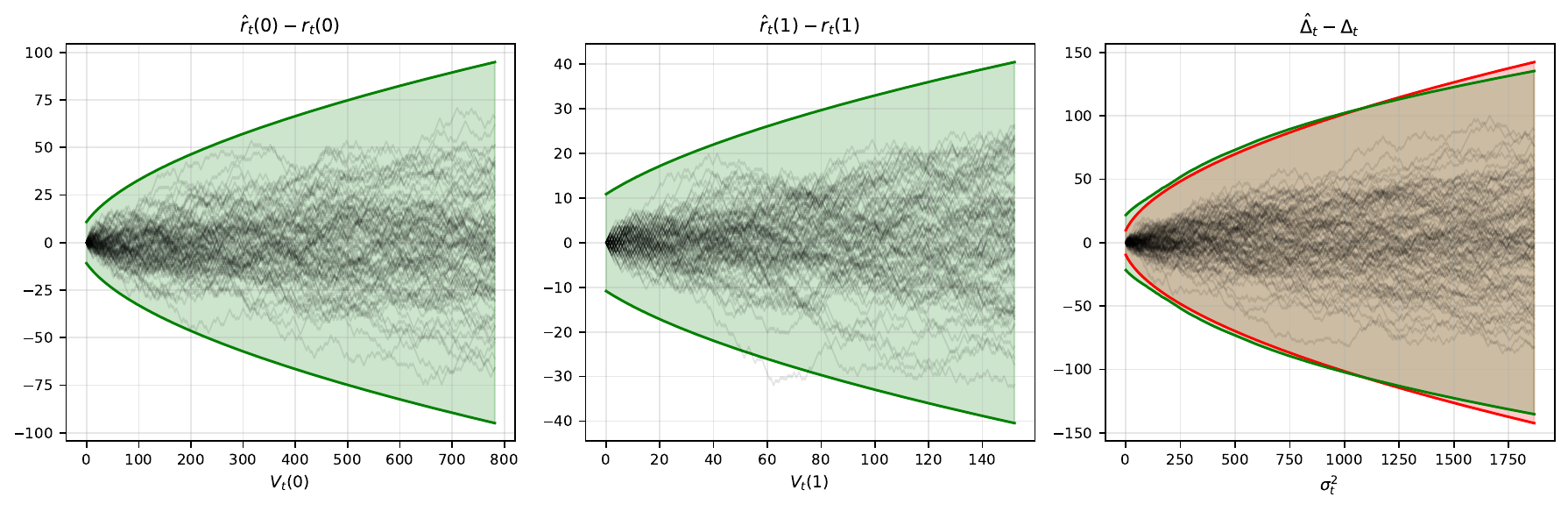}}
    \caption{Sample paths of IPW estimation errors from 100 Monte Carlo replications of Dataset~\ref{dataset:simulation}.
    Left panel: $\hat{r}_t(0)-r_t(0)$ plotted against the oracle variance clock $V_t(0)$, with boundary $\pm b(V_t(0);\alpha/2)$.
    Center panel: $\hat{r}_t(1)-r_t(1)$ plotted against $V_t(1)$, with boundary $\pm b(V_t(1);\alpha/2)$.
    Right panel: $\hat{\Delta}_t-\Delta_t$ plotted against the variance-upper-bound clock $\sigma_t^2$, with the union-bound boundary $\pm\{b(V_t(0);\alpha/2)+b(V_t(1);\alpha/2)\}$ in green and the benchmark boundary $\pm b(\sigma_t^2;\alpha)$ in red.}
    \label{fig:sample_paths}
    \end{center}
    \vskip -0.2in
\end{figure*}

\ifdefined\arxivversion
\else
\begin{figure}[ht]
    \vskip 0.2in
    \begin{center}
    \centerline{\includegraphics[width=0.8\columnwidth]{figures/nonstationary_reward_vs_time.pdf}}
    \caption{Illustration of outcome dependence on treatment, internal time, and external time for Dataset~\ref{dataset:simulation}. The potential outcome $y_i(w)$ depends on treatment $w$ (which determines the baseline $\beta_w$), internal time $s_i = t_i(w) - E_i$ (time since entry, contributing a logarithmic growth), and external time $t_i(w)$ (contributing cyclical variation).}
    \label{fig:nonstationary_reward}
    \end{center}
    \vskip -0.2in
\end{figure}
\fi

\begin{figure}[ht]
    \vskip 0.2in
    \begin{center}
    \centerline{\includegraphics[width=0.5\columnwidth]{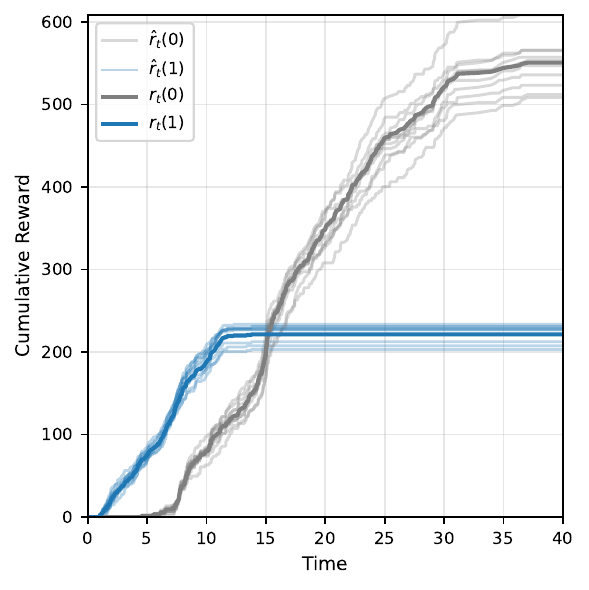}}
    \caption{Ten realizations of $\hat{r}_t(1)$ and $\hat{r}_t(0)$ obtained by resampling treatment indicators $w_i$, compared with true values $r_t(1)$ and $r_t(0)$, for Dataset~\ref{dataset:simulation}.}
    \label{fig:example_paths}
    \end{center}
    \vskip -0.2in
\end{figure}

\begin{figure}[ht]
    \vskip 0.2in
    \begin{center}
    \centerline{\includegraphics[width=0.8\columnwidth]{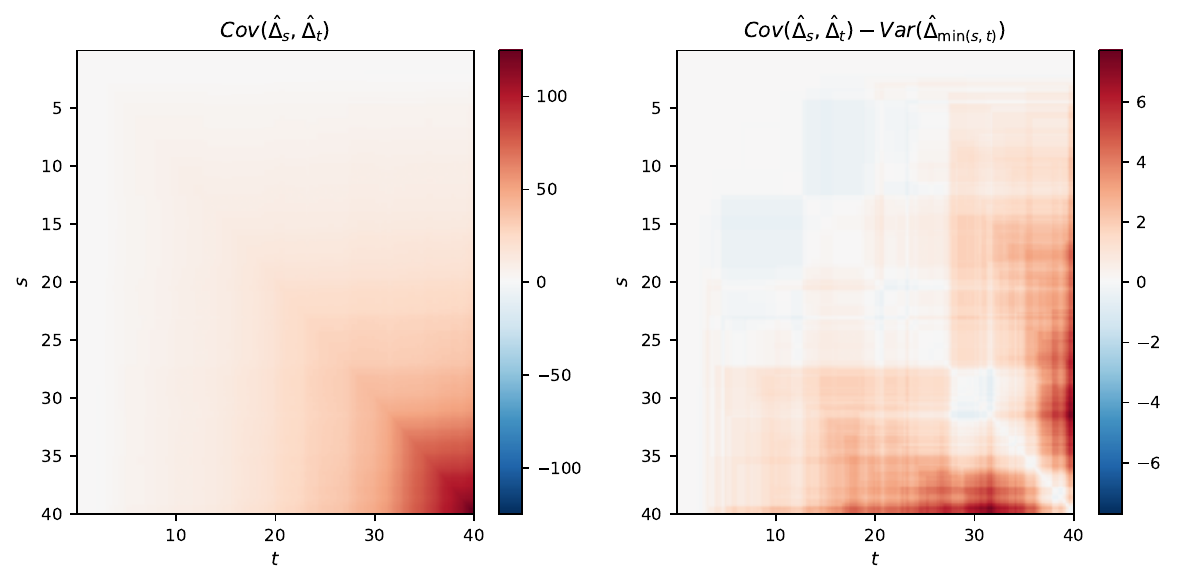}}
    \caption{Example of martingale violation for $\mathrm{Cov}(\hat\Delta_s, \hat\Delta_t) \neq \mathrm{Var}(\hat\Delta_s)$ for Dataset~\ref{dataset:simulation}, using random augmentation terms $m_{it}(0) \stackrel{iid}{\sim} \mathrm{Uniform}(0.5, 2.0)$ and $m_{it}(1) \stackrel{iid}{\sim} \mathrm{Uniform}(0.2, 1.0)$.}
    \label{fig:covariance_example}
    \end{center}
    \vskip -0.2in
\end{figure}

\clearpage

\ifdefined\arxivversion
\else
\section{Case Study: Out-of-Memory Errors}
\label{section:case_study}

In software delivery, A/B ``Canary'' Tests are often used as quality control gates to compare the release candidate (treatment) against the existing software version (control) in a production environment \citep{rapidregression}.
Devices are randomized into a treatment arm at app-startup (staggered entry) and emit logging events whenever an error occurs.

Out-of-memory (OOM) errors are a particularly important class of software defects where delayed outcomes arise naturally.
A memory leak occurs when a program allocates memory but fails to release it when no longer needed; over time, memory usage grows until the system cannot satisfy an allocation request, triggering an OOM error.
The delay between when a device starts running the buggy software and when the OOM error manifests depends on usage patterns and available memory - some devices may crash within minutes, others after hours.
This makes OOM errors an ideal application for our methodology: the outcome (an OOM crash) is inherently delayed.

Anytime-valid inference is imperative for this setting.
Due to the large-scale nature of such technology experiments, it is crucial to abort the canary test and roll back the release if the treatment arm is significantly worse than the control arm.
A number of anytime-valid rank-based statistics could be considered for this setting \citep{ter2020anytime, lindon2022anytime, gu1991weak};
however, these all assume a proportional hazards model for the hazard rate of errors, which is violated when resource depletion drives event timing.

We apply our methodology here in the special case $y_i(w) = 1$ for all $i$ and $w$, yielding a simple count-based approach.
This addresses the counterfactual question ``How many unique devices would have encountered an OOM error by the current time $t$, had they all been assigned to the treatment arm?''.
This corresponds to a counterfactual reality in which the release engineer had simply deployed the release candidate to all global production devices.
The same question can be asked for the control arm, and compared with the treatment.

We apply this methodology to telemetry data from a real production canary test, where devices were randomized 50/50 between the release candidate and the existing software version.
Using the IPW estimator, Figure~\ref{fig:design_based_cs} shows the design-based confidence sequences constructed using Theorems~\ref{thm:asymptotic_arm_confidence_sequence} and~\ref{thm:asymptotic_difference_confidence_sequence} and the sequential $p$-value from Corollary~\ref{cor:sequential_pvalue}.
Almost immediately, we can detect an increase in OOM errors among devices running the newer treatment software.

\begin{figure}[ht]
    \vskip 0.2in
    \begin{center}
    \centerline{\includegraphics[width=\columnwidth]{figures/79119_confidence_sequences.pdf}}
    \caption{Design-based confidence sequences for OOM error counts. Left: single-arm confidence sequences for $r_t(0)$ and $r_t(1)$. Center: treatment effect confidence sequence for $\Delta_t = r_t(1) - r_t(0)$. Right: sequential p-value for testing $H_0: \Delta_t = 0$.}
    \label{fig:design_based_cs}
    \end{center}
    \vskip -0.2in
\end{figure}
\fi

\ifdefined\arxivversion
\section{Comparison to \citet{lindonkallus}}
\else
\subsection{Comparison to \citet{lindonkallus}}
\fi
\label{app:lindon-kallus}

\citet{lindonkallus} model the arrival of OOM errors through time as an inhomogeneous Poisson process with cumulative intensity $\Lambda_t(w)$, constructing confidence sequences via conjugate Gamma mixtures.
In the restrictive special case when assignment probabilities are \textit{constant} ($\pi_i(w) = \pi(w)$ for all $i$), the design-based IPW estimator and the Poisson-based estimator are closely related.

Under the design-based framework, the IPW estimator for arm $w$ is
\begin{equation}
    \hat{r}_t(w) = \sum_{i: w_i = w, t_i \leq t} \frac{y_i}{\pi_i(w)} = \frac{N_t(w)}{\pi(w)},
\end{equation}
where $N_t(w)$ is the observed count in arm $w$ by time $t$.
Under the Poisson model, $N_t(w) \sim \text{Poisson}(\Lambda_t(w))$ is treated as the realization of a time-inhomogeneous Poisson process with cumulative intensity $\Lambda_t(w) = \mathbb{E}[N_t(w)]$.
\citet{lindonkallus} construct confidence processes on $\Lambda_t(1)$, $\Lambda_t(0)$, and the difference $\Lambda_t(1) - \Lambda_t(0)$, which can be simply rescaled by $1/\pi(w)$ to adjust for unbalanced assignment.
Figure~\ref{fig:poisson_cs} shows the Poisson-based confidence sequences on the same data, scaled by $1/\pi(w) = 2$.

\begin{figure}[ht]
    \vskip 0.2in
    \begin{center}
    \centerline{\includegraphics[width=\columnwidth]{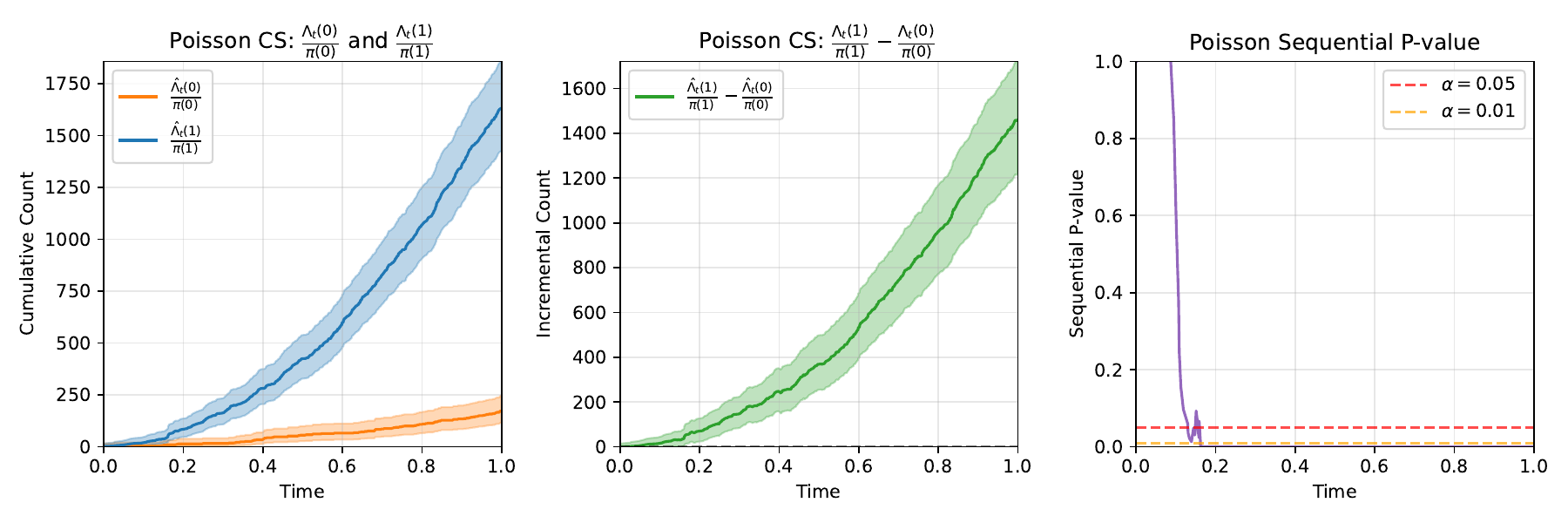}}
    \caption{Poisson process confidence sequences for the same data, scaled by $1/\pi(w) = 2$ to target the same estimand $r_t(w)$ as the design-based approach. Left: confidence sequences for $r_t(0)$ and $r_t(1)$. Center: confidence sequence for $\Delta_t = r_t(1) - r_t(0)$. Right: sequential p-value. The point estimates are identical to Figure~\ref{fig:design_based_cs}; the confidence intervals differ only due to the distinct boundary functions used by each method.}
    \label{fig:poisson_cs}
    \end{center}
    \vskip -0.2in
\end{figure}

Both approaches yield identical point estimates when the Poisson estimates are scaled by $1/\pi(w)$.
They differ, however, fundamentally in how they handle uncertainty.
The uncertainty in \citet{lindonkallus} is \textit{model-based}, whereas the uncertainty in our approach is \textit{design-based}, leveraging the randomization of treatment assignment.
The former requires the modeling assumption of a \textit{Poisson} arrival process, which is a critical assumption as it allows a martingale property to be established via the memoryless independent increments property.
However, the Poisson model is not always appropriate, and, as we demonstrate here, not necessary.
We achieve the martingale property in the design-based framework via randomization and SUTVA; no modeling assumptions are required.
Moreover, the Poisson approach does not naturally accommodate varying assignment probabilities $\pi_i(w)$ across units, which precludes release engineers from using adaptive experimental designs to carefully roll-out new software.
In short, the design-based approach makes no distributional assumptions and is more flexible.

\end{document}